\title{Distributing Context-Aware Shared Memory Data Structures: A Case Study on Singly-Linked Lists}
\date{April 2023}
\newif\iffullalg\fullalgfalse
\crefname{algocf}{alg.}{algs.}
\Crefname{algocf}{Algorithm}{Algorithms}
\definecolor{asparagus}{rgb}{0.53, 0.66, 0.42}
\newcommand{\sandeepnote}[1]{{\color{asparagus}#1}}
\newcommand{\raaghavnote}[1]{{\color{magenta}#1}}
\definecolor{highlight}{rgb}{0.54, 0.17, 0.89}
\newcommand{\hl}[1]{{\color{highlight}#1}}
\newcommand{\br}[1]{\langle #1 \rangle\xspace}
\newcommand{\Next}{{\tt Next\xspace}\xspace}
\newcommand{\Lookup}{{\tt Lookup}\xspace}
\newcommand{\Head}{{\tt Head}\xspace}
\newcommand{\dummyNode}{{\tt DummyNode}\xspace}
\newcommand{\Tail}{{\tt Tail}\xspace}
\newcommand{\subtail}{{\tt SubTail\xspace}\xspace}
\newcommand{\sublist}{{sublist}\xspace}
\newcommand{\sublists}{{sublists}\xspace}
\newcommand{\Registry}{{\tt Registry\xspace}\xspace}
\newcommand{\subhead}{{\tt SubHead\xspace}\xspace}
\newcommand{\subheads}{{\tt SubHeads\xspace}\xspace}
\newcommand{\AtomicInteger}{{\tt AtomicInteger}\xspace}
\newcommand{\ItemRef}{{\tt ItemRef\xspace}\xspace}
\newcommand{\Item}{{\tt Item\xspace}\xspace}
\newcommand{\GetItem}{{\tt GetItem\xspace}\xspace}
\newcommand{\InsertAfter}{{\tt{InsertAfter}}\xspace}
\newcommand{\Insert}{{\tt Insert}\xspace}
\newcommand{\Search}{{\tt Search}\xspace}
\newcommand{\Key}{{\tt Key}\xspace}
\newcommand{\Delete}{{\tt{Delete}}\xspace}
\newcommand{\Split}{{\tt Split}\xspace}
\newcommand{\Move}{{\tt Move}\xspace}
\newcommand{\SwitchLB}{{\tt Switch\xspace}\xspace}
\newcommand{\newLocation}{{newLocation}\xspace}
\newcommand{\PArrow}{{\rightarrow}}
\newcommand{\AbortingMoveMethod}{{\tt AM}\xspace}
\newcommand{\TempReplMethod}{{\tt TR}\xspace}
\newcommand{\StartCount}{{startCount}\xspace}
\newcommand{\newStartCount}{{newStartCount}\xspace}
\newcommand{\newEndCount}{{newEndCount}\xspace}
\newcommand{\EndCount}{{endCount}\xspace}
\newcommand{\offset}{{offset}\xspace}
\newcommand{\offsets}{{offsets}\xspace}
\newcommand{\counterTemp}{{counterTemp}\xspace}
\newcommand{\TS}{{TS}\xspace}
\newcommand{\LC}{{LC}\xspace}
\newcommand{\Replicate}{{\tt Replicate}\xspace}
\newcommand{\ReplicateInsertAfter}{{\tt ReplicateInsertAfter}\xspace}
\newcommand{\ReplicateDelete}{{\tt ReplicateDelete}\xspace}
\theoremstyle{definition}
\newtheorem{definition}{Definition}[section]
\newtheorem{theorem}{Theorem}[section]
\newtheorem{lemma}[theorem]{Lemma}
\begin{document}

\author{\IEEEauthorblockN{Raaghav Ravishankar$^*$, Sandeep Kulkarni$^*$, Sathya Peri$^\dag$, Gokarna Sharma$^\ddag$}
\IEEEauthorblockA{
$^*$Michigan State University, East Lansing, MI 48824, USA,
{\it \{ravisha7, sandeep\}@msu.edu}}

$^\dag$IIT Hyderabad, Telangana, India,
{\it sathya\_p@cse.iith.ac.in}\\

 $^\ddag$Kent State University, Kent, OH 44242, USA,
{\it gsharma2@kent.edu}
}

\maketitle

\begin{abstract}
In this paper, we study the partitioning of a \textit{context-aware} shared memory data structure so that it can be implemented as a distributed data structure running on multiple machines. By context-aware data structures, we mean that the result of an operation not only depends upon the \textit{value of the shared data} but also upon the previous operations performed by the same client. 
%
While there is substantial work on 
designing 
distributed data structures, designing 
distributed context-aware data structures has not received much attention. 

We focus on singly-linked lists
as a case study of the context-aware data structure. We start with a shared memory context-aware lock-free 
singly-linked list and show how it can be 
{\textit{transformed} into} a distributed lock-free context-aware
singly-linked list. 
The main challenge in such a transformation is to preserve properties of client-visible operations of the underlying data structure.
We present two protocols that preserve these properties of client-visible operations of the linked list. In the first protocol, the distribution is done in the background as a low priority task, while in the second protocol the client-visible operations \textit{help} the task of distribution without affecting client latency. 
In both protocols, the client-visible operations remain lock-free.
Also, our transformation approach does not utilize any hardware primitives (except a compare-and-swap operation on a single word). 
We note that our transformation is generic and can be used for other lock-free context-aware data structures
that can be constructed from singly-linked lists.
\end{abstract}

\section{Introduction} \label{sec:introduction}
Historically, it is assumed that a data structure (e.g., a hash table, list, queue, stack) is being modified by one operation (e.g., insert, delete, enqueue, dequeue) at a time. This limited the throughput that one can achieve as the operations had to be performed sequentially. To overcome this limitation, we focused on concurrent data structures where the operations could be performed simultaneously as long as they are being done in different parts (e.g., different parts of the list). And, if concurrent operations conflicted with each other then one or more of them would be aborted and retried.

Concurrent data structures are limited by the computing (and possibly communication) ability of the computer/node hosting them.
One way to overcome these limitations is to distribute them across several machines. That would allow multiple operations to be performed simultaneously. The most common example in this context is distributed hash tables \cite{DHT} (or key value stores \cite{cassandra}) where a single hash table (respectively, key-value store) is split into multiple partitions and each partition is handled by a different node. This increases
the throughput of the hash table as each node can work in parallel.

A typical hash table implementation \cite{cormen2022introduction} provides Insert, Delete and Lookup operations. A key characteristic of these operations is that, given the content of the hash table, we can uniquely determine the output of these functions. For example, if the hash table contains the keys \{1, 2, 3\} and we invoke Delete(2), the resulting hash table will then be \{1, 3\}. Invoking Lookup(1) will return the value True. In other words, the output of hash table operations depends entirely on the state of the hash table when the operation is invoked and not on previous operations performed by a given client on it.

This property of hash tables is not satisfied by many data structures. As an example, consider the unordered list data structure that supports Lookup, Next and Insert operations. An unordered list may contain duplicate keys. If the list is of the form, $\br{1, 5, 2}$ and we invoke Lookup(5), it will return a reference to the element 5. If we invoke Next operation on it, it will return a reference to the element 2. Furthermore, the code ‘Lookup(5), Insert(4)’ will result in the list to be $\br{1, 5, 4, 2}$. On the other hand, the code ‘Lookup(1), Insert(4)’ will result in $\br{1, 4, 5, 2}$. Here the result of ‘Insert(4)’ changes depending upon whether the previous operation by that client was ‘Lookup(5)’ or ‘Lookup(1)’. Likewise, the result of next operation depends upon previous operations by the same client. In other words, the result of an operation not only depends upon the state of the list but also on previous operations performed by the same client. We call such data structures as being \textit{context-aware}.

Such context-aware data structures occur in practice in many cases. As an example, consider the use case of a large music playlist or a large photograph (photo) album. A group of mobile phone users (such as a family) have made a common album but are willing to commit only a limited amount of storage for the album. They have their devices connected through a private network to look at the album whenever needed. They can do a $\Lookup(key)$ to fetch a specific photo matching a given key. Here key could be a characteristic of the picture (e.g., portrait, nature photography) thereby making it necessary that several elements with the same key are present in the list.  Since an album is ordered with one photo after another, they may want to do a $\InsertAfter(node, key)$ to insert a new photo after a specific photo in the album. They may also want to skim through the album one photo at a time by a traversal with $\Next$ operation. If a specific photo doesn't fit a given place, they can $\Delete$ the photo, and $\Insert$ it at another place as they skim through using subsequent $\Next$ operations. Furthermore, it is desirable that if two users are traversing different parts of the list, or inserting elements in different parts of the list, then their operations are unaffected by each other. This could be achieved if we \textit{split} the list across multiple machines such that each machine only maintains a part of the list. In this case, if two users are traversing through different parts of the list, their code is likely to be executed on different machines. Furthermore, when a client is traversing through the list, it will benefit from \textit{locality of reference}; if it finds one of the elements on a particular server, the subsequent elements in the traversal are likely to be on the same server as well. This means that there is a potential to perform more complex operations such as `get the next $k$ items’, if desired.

\textbf{Contributions of the paper: }
We focus on the question of \textit{transforming} a concurrent 
shared memory algorithm for a context-aware data structure into a distributed protocol so that the data structure can be split into multiple machines. The protocol is executed by 
a set of machines (or \textit{servers}) that can only communicate via message-passing. We focus our study on a singly-linked list. Specifically, we present a  protocol for transforming a concurrent shared memory implementation of an unordered linked list and a sorted linked list into distributed ones. 
We separate the linked list operations into two categories: {\em client-visible} operations that clients perform on the list and {\em client-invisible} operations that a system performs in the background.  It is simple to achieve the transformation either by weakening the guarantees of client-visible operations or, by blocking all write operations,
while a part of the list is being \textit{moved} from one server to another, for load balancing. 
However, it is challenging to achieve the transformation without losing both the client guarantees of the shared memory list, and the availability of client-visible operations.


To address this challenge, we illustrate a lock-free context-aware list with $\Next$, $\Lookup$, $\InsertAfter$ and $\Delete$ as its client-visible operations, and present two protocols that transform the list in a way that the client-visible operations are never blocked, and that their operation semantics are preserved
even when a part of the list is being moved to another server. 
%
Our protocols only rely on compare-and-swap (CAS) instructions (and AtomicInteger/RDCSS \cite{MultiCAS} that can be built from CAS) that are supported by most modern processors including Intel 64 and AMD64.
To our knowledge, this is the first such transformation for context-aware data structures.

To achieve successful transformation we introduce the notion of \sublists within a list. We add operations to \textit{split} a given \sublist into multiple \sublists and on an as-needed basis, \textit{move} a \sublist to another server, and thereby, \textit{switch} its server ownership. We also ensure that these client-invisible operations (e.g., garbage collection, load balancing, etc.) do not interfere with client-visible operations (e.g., insert, delete, etc.) on the list. 


\textbf{Outline of the paper: }\label{subsec:Outline} 
    The rest of the paper is organized as follows. 
    \Cref{sec:lockfreeconclist} first presents a concurrent, shared memory, unordered linked list and its API. Then \Cref{sec:FrameworkForDistribution} describes an approach for distributing the linked list, and 
    discusses the requirements of a transformation protocol, by defining a set of  operations to \textit{support} the transformation. 
    We then present two transformation protocols: Aborting Move (\AbortingMoveMethod) in \Cref{sec:AMSection}, and Temporary Replication (\TempReplMethod) in \Cref{sec:TRSection}. 
    Due to space constraints, their proofs of correctness are provided in Appendix sections \ref{AMAppendix} and \ref{TRAppendix}.
    \Cref{sec:Discussion} compares the protocols and provides ways to further augment the linked list without losing the ability to support the transformation. Related work is then discussed in \Cref{sec:RelatedWork}. Finally, Section \ref{sec:Conclusion} concludes the paper with immediate extensions and a short discussion on future work. 
\section{A lock-free shared memory list}\label{sec:lockfreeconclist}

\begin{figure}
    \centering
    \includegraphics[scale=0.7]{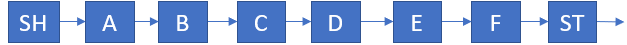}
    \caption{A sample list} 
\label{fig:sampleList}  
\vspace{-3mm}
\end{figure}
In this section, we recall a simple implementation of a list. \Cref{fig:sampleList} shows a sample list, with a head node SH, a tail node ST and 6 nodes inside it. We refer to Key type as the value used to $\Lookup$ for any node (or item). The node can additionally be equipped to store payload data ({such as a photo}). {To mark some special nodes such as sentinel nodes, Key reserves a few values such as the $\Head$ and $\Tail$.}  
 We first provide its API semantics, then justify its context-awareness and finally illustrate its shared-memory implementation.

\subsection{API for Distributed Linked List}\label{subsec:API}

\begin{itemize}[nosep, leftmargin=*]
\item $\ItemRef$ \ $\Head$()\ /\ $\Tail$(): The $\Head$() (respectively, $\Tail$())  provides a reference to the first (respectively, last) item in the list. These are immutable dummy items. \footnote{Since the implementations of \ $\Head$, $\Tail$ and $\GetItem$ are straightforward, we omit them in further discussion.}

\item $(\ItemRef, \Item)\ \GetItem\ (\ItemRef)$: The $\GetItem$ operation takes a reference to a list node as parameter, and returns the data structure of the node contained within it.
Also, if the item has moved since $\ItemRef$ was obtained by the client, it provides the new address and the location of the new server where the item resides. 
In many cases, we only need one of the return values of $\GetItem$. Hence, for simplicity of presentation, we overload $\GetItem$ to return just $\Item$ or $\ItemRef$ alone. 

\item $\ItemRef \ \Lookup(key)$: The lookup operation takes a key and looks for an element with that key in the list. 
If a key is present in multiple items, all of the instances encountered during traversal are returned.
A $\Lookup$ may return `Not found' if no item with the given key has persisted for the entire duration of the $\Lookup$ operation. 


\item $\ItemRef \ \InsertAfter\ (\ItemRef, \Key)$: $\InsertAfter$ will insert a new element of the provided key at the referenced item and return the newly inserted element.

\item $\ItemRef \ \Next(\ItemRef)$: This will return the reference to the next element in the list. For example, if $\Next$(Ref(2)) is invoked on $\br{H, 5, 2, 7, 9, T}$, then the returned value will be Ref(7). Note that 
reference (as denoted by Ref, and later by $\ItemRef$)
is unique to a specific item in the list. 


\item $\Delete\ (\ItemRef)$: Logically delete the element pointed by $\ItemRef$. If the element was deleted earlier, the operation will be unsuccessful. $\Delete$ cannot be invoked on $\Head$ or $\Tail$. The deleted nodes are de-linked and garbage collected by a background server process (e.g., \cite{debra}, \cite{hazardPointers}), at a time that is most convenient for the server. It is only required that the garbage collection removes the node when there is no other thread referring to these deleted items. To save space, we illustrate the de-linking of deleted nodes in \Cref{sec:concurrentListAppendix}. 

\end{itemize}

\subsection{Context-Awareness}
Intuitively, a data structure is context-aware if it supports at least one operation which can only be invoked with a value returned by a previous operation on that data structure.

As an example, consider the implementation of an unordered linked list from \Cref{subsec:API}. In this example, the operation $\InsertAfter$ (respectively, $\Next$) can be invoked only if the client already has a reference to some element in the list (via a $\Lookup$ operation, $\Head$ operation, or previous $\InsertAfter$ or $\Next$ operation). Hence, the unordered linked list with the API from \Cref{subsec:API} is context-aware. If the syntax of $\InsertAfter$ was changed to leave this reference implicit ({as in the introduction, where the last client lookup provides the insertion context}), the list would still be a context-aware data structure.

By contrast, consider a simple hash table implementation with `\textit{set}' operations - Insert, Delete and Contains, where insert(key) (respectively, delete) inserts (respectively, deletes) the key to the hashtable and contains checks for a key in the hashtable. These operations do not require a value returned by a previous operation, and thus a simple hashtable is not context-aware. Similarly, a simple key value store with GET and PUT operations on its keys is not context-aware.

{We provide more examples and a mathematical definition of context-awareness in \Cref{sec:additionalcontextaware}.}

\subsection{Implementation of the Shared Memory Linked List}
\label{sec:baseimplementation}
\subsubsection{Algorithm}\label{subsec:ConcurrentListAlgorithm}

\Cref{alg:BaseAlgUpdate} showcases the implementation of the list operations. While the algorithm has been inspired from \cite{harris2001pragmatic}, the linked list in this paper is unordered.
The approach has also been simplified by having the mark bit in \cite{harris2001pragmatic} as the \textit{isDeleted} member for an item and insertions instead done by a Restricted Double Compare Single Swap (RDCSS)\cite{MultiCAS} operation. 
RDCSS takes five arguments: first two being a memory location and a value to compare its value against, and the last three being a memory location, a compared value and  the new value for the second location if both of the comparisons returned true. We interpret the returned value as true if both comparisons were true (hence value was updated), and false otherwise.

The $\InsertAfter$ operation first creates a temporary element and sets its \textit{next} item (Line \ref{alg:BaseAlgoUpdateInsertTemp}).
Then, $RDCSS$ is used to insert this item in the list. 
Specifically, the $RDCSS$ statement on Line \ref{alg:BaseAlgoInsertCAS} checks that $prev.next=temp$ (i.e., $prev.next$ has not changed between Lines \ref{alg:BaseAlgoUpdateInsertTemp} and \ref{alg:BaseAlgoUpdateInsertLoopContentEnd}), and that the item has not been deleted since the check on Line \ref{alg:BaseAlgoUpdateInsertLoopContentStart}). If this is satisfied, it sets $prev.next$ to be $temp$, thereby inserting the element in the list. 
If multiple elements are being inserted simultaneously, or the element gets deleted in the middle, {$RDCSS$ will fail}
and the $\InsertAfter$ operation will be retried. 
The $\Delete$ operation sets \textit{isDeleted} to true if the element was not deleted previously. If the element was deleted previously, it returns a failed status (Line \ref{alg:BaseAlgoDeleteFail}).
$\Lookup$ operations goes through the list to see if the key is present. Either it returns one or more $\ItemRef$(s) that contain the given key, or it returns `Not found'.
Finally, the implementation of $\Next$ identifies the \textit{next} element that is not yet deleted.

\begin{algorithm}
\scriptsize
\DontPrintSemicolon
\caption{A lock-free unordered linked list}\label{alg:BaseAlgUpdate}
\iffullalg
\Struct{\ItemRef} {
int id \Comment*[r]{serverId in which the node is located }
Address address \Comment*[r]{Physical Address of the node inside the server}
}
\else
\fi
\Struct{$\tt{Item}$} {
$\Key$ key,\;
$\ItemRef$ next,\;
$\tt{Boolean}$ isDeleted\;
}
\Fn{$\ItemRef$ \ $\InsertAfter$($\ItemRef$ prev, $\Key$ key)}{
    \Repeat{$\tt{rdcss}$(\&prev.isDeleted, false, \&prev.next, temp, newItem)\label{alg:BaseAlgoInsertCAS}}{
    \If{prev.isDeleted\label{alg:BaseAlgoUpdateInsertLoopContentStart}} {
        return ``Failed : Node not found"\;
    }
    
    $temp \gets prev.next$\;\label{alg:BaseAlgoUpdateInsertTemp}
    $newitem \gets new\ Item()$\;
    $newitem.next \gets temp$\;
    $newitem.key \gets key$\;\label{alg:BaseAlgoUpdateInsertLoopContentEnd}}
    \label{alg:BaseAlgoUpdateInsertAfterLoop}
    return newItem\;
}

\Fn{$\tt{Status}$ $\Delete$($\ItemRef$ prev)}{
    \Repeat{cas(\&prev.isDeleted, false, true)\label{alg:BaseAlgoDeleteCAS}} {
        \If{prev.isDeleted} {
            return ``Failed : Node not found"\;\label{alg:BaseAlgoDeleteFail}
        }
    }
    return ``Success"\;
}
\Fn{$\ItemRef$ \ $\Lookup$($\Key$ key)}{
    $resultSet \gets nil$\;
    $curr \gets \Head().next$\;
    \While{$curr.key \neq \tt{Tail}$}{
    \If{curr.key = key and curr.isDeleted = false }{
        resultSet.add(curr)\;
    }
    $curr \gets curr.next$\;
    }
    return resultSet\;
}

\Fn{$\ItemRef$ \ $\Next$($\ItemRef$ prev)} {
    $curr \gets prev$\;
    \Repeat{$curr.isDeleted = false$\label{alg:BaseAlgoNextIsDeleted}} {
        $curr \gets curr.next$\;
    }
    return curr\;
}
\end{algorithm}

\subsubsection{Properties of \Cref{alg:BaseAlgUpdate}}\label{subsec:LMLinearizability}

The properties of \Cref{alg:BaseAlgUpdate} are as follows. The operations $\Next$ (in a sublist), $\InsertAfter$ and $\Delete$ are \textit{linearizable}(\cite{linearizability}). 
These properties follow from the semantics of the \textit{CAS} and \textit{RDCSS} operations as explained in \Cref{sec:concurrentListAppendix}.

The property of $\Lookup$ is as follows: If an instance of key $k$ is present throughout the $\Lookup$ operation, then that instance of $k$ will be included in the response of $\Lookup$($k$). A $\Lookup$ may return `Not found' if no item with the given key has persisted for the entire duration of the $\Lookup$. (A linearizable $\Lookup$ requires coordination between all machines holding the list, and it is not desirable in a distributed data structure. Hence, we consider a weaker version of $\Lookup$.)

\section{Framework for Distribution}\label{sec:FrameworkForDistribution}
    In this section, we transform
\Cref{alg:BaseAlgUpdate} so that the list elements can be distributed across several machines.
We consider 
the model
where the list items are maintained by a set of servers.
While the number of servers can vary dynamically, we assume that the clients are aware of the identities of all the servers that are assigned to maintain the list. Clients request operations, and perform them in one or more servers. We assume that the pool of servers, and their network locations are known to the client. In practice, this information can be hidden from the clients by abstracting  it to a middle-ware service. We require the channels to be reliable, but they do not need to deliver messages in any particular order.

\subsection{Distributed Client-Server Architecture}\label{subsec:approachForDistbn}
In our work, a list is divided into multiple \sublists, and these \sublists are distributed across a set  of servers. 
Each \sublist has dummy elements $\subhead$ and $\subtail$.
Each server maintains a $\Registry$ that provides the $\subheads$ of all the \sublists that it \textit{owns}.


\begin{figure}
    \centering
    \includegraphics[scale=0.5]{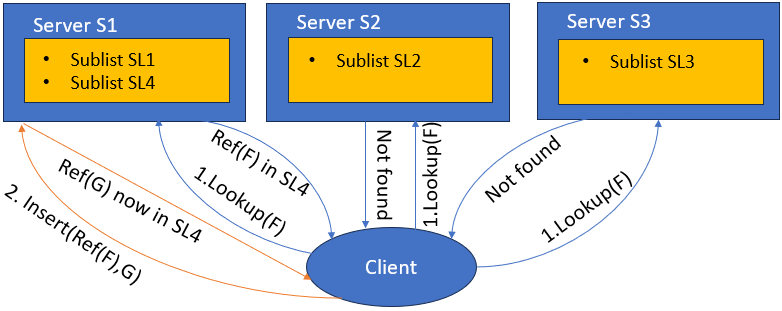}
    \caption{$\Lookup$ (blue) for items from all active servers and $\InsertAfter$ (red) requested at the server that owns the item.}
    \label{fig:architecture}%
\end{figure}

\begin{figure}
    \centering
    \includegraphics[scale=0.4]{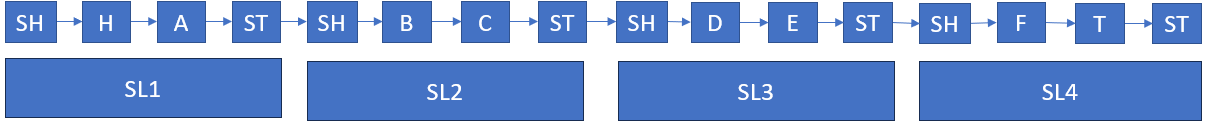}
    \caption{A list being composed of a set of 4 \sublists.}
    \label{fig:archSublists}
    \vspace{-3mm}
\end{figure}

Consider a linked list that is composed of 4 different \sublists as shown in \Cref{fig:archSublists}. 
Here, the list is $\br{A, B, C, D, E, F}$ that is split into four \sublists, $SL_1$, $SL_2$, $SL_3$ and $SL_4$. The \sublists are on servers $S_1$, $S_2$ and $S_3$ as shown in \Cref{fig:architecture}. 

To define the list API, we rely on an opaque data type $\ItemRef$ (or Ref, for short). Specifically, $\ItemRef$ for an item provides a unique reference to that item. One possible implementation idea is that $\ItemRef$ can be a combination of the physical address and the id of the server it resides in.

As illustrated in \Cref{fig:architecture}, at the time of a $\Lookup$ for a key $F$, a client broadcasts its request to all servers. The client will receive matching nodes from every server that has a node of the requested key($F$) and a `\textit{Not Found}' type response from those that do not have the key. A stub added to the client will read all these responses and decide on a node of its choice, possibly from reading payload data contained within the node. 

An $\InsertAfter$ operation, on the other hand, uses $\ItemRef$ returned from a previous operation. For example, if we do a lookup for key $F$ then it will return a reference to an element in $SL_4$. $\InsertAfter$(Ref(F), G) will now only be sent to server $S_1$ that is hosting $SL_4$.


If \sublists maintained at different servers grow/shrink at different sizes then it would be necessary to rebalance the load across different servers. This will require additional client-invisible mechanisms. We discuss them in \Cref{sec:Requirements}. 


\subsection{Requirements of the Transformation:}
\label{sec:Requirements}
\begin{figure}
    \centering
    \includegraphics[scale=0.5]{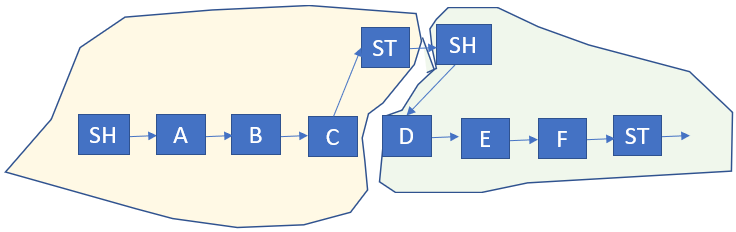}
    \caption{Splitting of a \sublist inserts a block containing two nodes(ST and SH) at the chosen point of \Split (C)}
    \label{fig:splitsublist}
    \vspace{-3mm}
\end{figure}
{When a distributed list contains a \sublist $SL$, that grew beyond a certain size, it needs to be \textit{split} into two \sublists (say, $SL_1$ and $SL_2$). Additionally, when the server hits its throughput limit, it can \textit{move} some of its  \sublists to a different server, to increase the throughput of the distributed list.}

With this motivation, we introduce three new client-invisible operations, $\Split$, $\Move$ and $\SwitchLB$. The semantics are:
    
    
    \begin{itemize}[nosep, leftmargin=*]
        \item \textbf{$(\ItemRef,\ItemRef )\ \Split\ (\subhead, \tt{node})$:}
        The requirement of the $\Split$ operation is to successfully split a given \sublist into two new smaller \sublists. This is achieved by inserting a $\subtail$ followed by a $\subhead$ at the node where the given \sublist is to be separated. This new $\subhead$ is added to the $\Registry$ and the inserted dummy nodes are then returned.
        For example, if $\Split$ is called on the list in Figure \ref{fig:splitsublist} on item \textit{C}, then it will result in two \sublists, $SL_1=\br{A, B, C}$ and $SL_2 =\br{D, E, F}$. 

        \item \textbf{$\Move\ ({{\subhead}, \tt{serverId}})$:} The requirement of the $\Move$ operation is to successfully move all nodes of a \sublist from a server, say $S_1$, to a server given by the id of the server, say $S_2$. At the end of the $\Move$ operation, the \sublist present in $S_2$ should be the most recent version of the \sublist and any future update on the \sublist should be reflected in $S_2$.        
       
        \item \textbf{$\SwitchLB\ ({\tt{\subhead}, \tt{serverId}, \tt{\subhead}})$:}  The $\SwitchLB$ operation takes in the local and remote subheads of the moved \sublist, seamlessly de-link the local (stale) \sublist, and replace it with the moved (active) \sublist without losing availability of the \sublist during the process. After the switch, the stale \sublist must be free to be physically deleted.
        
    \end{itemize}
    Additionally, we require the client-visible operations to remain lock-free at all times.

\noindent{\bf Assumptions: }\label{para:Assumptions}
\begin{itemize}[nosep, leftmargin=*]
    \item There should be hardware support for single-word CAS instructions. This inturn helps us utilize RDCSS and AtomicInteger as linearizable CAS constructions. 
    \item 
We do not make assumptions on how a server decides when a \sublist should be $\Split$ or which \sublist should be moved and where. Some possible approaches are (1) keeping a count to limit the number of items in a \sublist, 
(2) ensuring that the number of items on different servers are \textit{close to one another}. We only focus on ensuring that these client-invisible operations do not violate the properties of the underlying data structure.

\item
Since client-invisible operations are self-invoked by the server, we assume that at most one operation from $\Split$, $\Move$, and $\SwitchLB$, is active on a given \sublist.
Additionally, with the same reasoning, we disable any physical deletion mechanisms during these operations. 

\item 
We require each $\ItemRef$ returned to have a maximum expiration time and a request timeout period. We denote the sum of these two quantities by $\theta$. If a client needs to keep the referenced item
for a longer duration, it needs to \textit{refresh} it using $\GetItem$. {We expect this to be easily achieved by a client stub.}


\end{itemize}
    
    


\subsection{Key Challenges}
    In this section we go over the difficulty in designing the transformation operations of \Cref{sec:Requirements}. The main difficulty comes from designing the $\Move$ and $\SwitchLB$ operations in a way that the client-visible operations still continue to make \textbf{lock-free progress} during those transformation operations. A $\Move$ operation cannot just use a snapshotting procedure on a \sublist using approaches such as  \cite{lockfreeIterators}, as the snapshot thus created should continue to gain the effect of all writes ($\InsertAfter$, $\Delete$) that were invoked on the \sublist during, and after the $\Move$.
    
    Another difficulty comes from the context-awareness of the data structure. The distribution and any movement of data must take into account the client states, i.e., the state of values returned by previous operations that the client may still be holding. These client operations must work correctly and in a lock-free manner even while load balancing the data structure. Inspite of this, the $\SwitchLB$ operation needs to retain the guarantees of client operations and preserve the structure of the linked list. A non context-aware data structure, such as a hashtable, does not face this challenge. 






\section{Aborting Move (\AbortingMoveMethod) protocol}
\label{sec:AMSection}

In this section, we present our \AbortingMoveMethod protocol for a distributed list that does not use locks. Rather, it relies on prioritizing client-visible operations over client-invisible operations. 


\subsection{General Idea}\label{subsec:AMGeneralIdea}
    The general idea behind \AbortingMoveMethod protocol is to treat the $\Move$ operation as the operation of least priority, that could be aborted in case of a concurrent update.
    To achieve this, we use atomic counters, \StartCount and \EndCount, 
    for each \sublist. 
    We initialize these counters to 0. All update operations ($\InsertAfter$, $\Delete$) atomically increment the \StartCount (respectively, \EndCount) when they start (respectively, finish).


    A $\Move$ operation on a \sublist from $S_1$ to $S_2$, first, makes a copy of \EndCount. 
    At the end of the move, this value of \EndCount and the current value of \StartCount are compared to determine if an update operation occurred during the move. 
    If an update operation occurred on any element in the \sublist, the $\Move$ operation is restarted. 
    On the other hand, if no update operations occurred, then all future
    updates are sent to $S_2$.

\subsection{Data Structure of an Item}\label{subsec:AMDS}

In addition to \Cref{alg:BaseAlgUpdate}, each node has two pointers to the $\AtomicInteger$ variables, \StartCount and \EndCount,  associated with the \sublist the node is present in. These counts, the $\subtail$ of the previous list, and another integer, \textit{offset}, will reside in the $\Registry$ as properties of the \sublist corresponding to that entry.
Additionally, a $newLocation$ variable is present in each node to provide the new location for the item in case it has been moved to another server. It is of type $\ItemRef$, and so has a \textit{serverID} and a physical address contained within it.

\iffullalg
\begin{algorithm}
\scriptsize
\DontPrintSemicolon

\caption{\AbortingMoveMethod Protocol Data Structure}\label{alg:AMAlgoStruct}
\Struct{Item} {
\iffullalg
\Key key\;
\ItemRef next\;
\else 
Include key, isDeleted and next from \Cref{alg:BaseAlgUpdate}\;
\fi
$\AtomicInteger$ *\StartCount,  *\EndCount\; \Comment*[r]{ Here * indicates reference-type members}
$\ItemRef$ \ \newLocation \Comment*[r]{Remote node location}
}
$\Registry$: \textit{Each entry refers to a \sublist by storing its $\subhead$ pointer as key and a tuple value containing \{\StartCount, \EndCount, \offset , the subtail of its previous \sublist\}}\;\label{alg:AMAlgoStructRegistry} 
\end{algorithm}
\fi
\subsection{Algorithm}\label{subsec:AMAlgorithm}
In this section, we describe the \AbortingMoveMethod protocol in detail. The pseudo code for the methods in this protocol are in Algorithms \ref{alg:AMAlgoLoadBalanceSplit}, \ref{alg:AMAlgoUpdate}, \ref{alg:AMAlgoRead} and \ref{alg:AMAlgoLoadBalanceMove}. We describe these operations, as follows: 



\noindent 
\textbf{$\Split$ (\Cref{alg:AMAlgoLoadBalanceSplit}):} 
The $\Split$ operation  creates a $\dummyNode$ and a $\subhead$ that is next to the $\dummyNode$. It then inserts the $\dummyNode$ into the \sublist (Line \ref{alg:AMAlgoLoadBalanceSplitCAS}).

Before the new \sublist is added to the $\Registry$, it additionally needs its own \StartCount and \EndCount variables. Hence, two new counter variables (\newStartCount and \newEndCount) are created (Lines \ref{alg:AMAlgoLoadBalanceSplitStartCount} and \ref{alg:AMAlgoLoadBalanceSplitEndCount}). After the $\Split$ of a \sublist $SL$ into $SL_1$ and $SL_2$, \sublist $SL_1$ will use the counters (\StartCount and \EndCount) associated with the original \sublist $SL$ and the \sublist $SL_2$ will use the new counters. Hence, pointers to all elements in $SL_2$ are changed to \newStartCount and \newEndCount.

        
        During the course of this operation, it is possible that an update operation increments \StartCount of $SL$ but increments \newEndCount that will belong to $SL_2$. This violates the expected invariant that $\StartCount - \EndCount$ of a \sublist is $0$ when no update operation is pending. 

        We fix the problem as follows: instead of maintaining the invariant ($\StartCount - \EndCount = 0$) when no update operation is pending, we maintain ($\StartCount - \EndCount =\offset$), where \offset is a parameter of the \sublist that only changes when the \sublist is $\Split$ . Given the original \sublist $SL$, we have $\StartCount-\EndCount=\offset_{SL}$ when no update operation is pending. Even if an update operation increments \StartCount but increments \newEndCount, the following invariant will remain true during the $\Split$ operation if no update operation is pending:
        $(\StartCount-\EndCount) + (\newStartCount-\newEndCount) =\offset_{SL}$. 
        Furthermore, $(\StartCount-\EndCount) + (\newStartCount-\newEndCount) > \offset_{SL}$ if an update is pending. 
        
        Hence, we compute $(\StartCount-\EndCount) + (\newStartCount-\newEndCount)$ to determine if it is equal to $offset_{SL}$. When they are equal, we know that no update operation is pending on $SL_1$ and $SL_2$, thereby finding the new \offsets on a snapshot. Thus, \offsets of $SL_1$ and $SL_2$ can be set to the computed $(\StartCount-\EndCount)$ and $(\newStartCount-\newEndCount)$ respectively. 
        
        Finally, $\Split$ adds the new $\subhead$ to the $\Registry$ and truncates the first half of the original \sublist by re-labelling the $\dummyNode$ as a $\subtail$. The node was initially a $\dummyNode$ in order to not truncate $\Lookup$ operations until the next \sublist is added into the $\Registry$ in Line \ref{alg:AMAlgoLoadBalanceSplitAddReg}.
        

\noindent \textbf{$\InsertAfter, \Delete$ (\Cref{alg:AMAlgoUpdate}): } 
$\InsertAfter$ and $\Delete$ increment the \StartCount associated with the \sublist before executing, and increments \EndCount after executing the operation. If the value of the \StartCount is negative, it indicates that the node has been moved (the underlying rationale is explained in $\Move$). Hence, the operation is forwarded or \textit{delegated} to the new machine where it is now active. Otherwise, the operation is identical to any lock-free implementation of the corresponding operation.
        
\noindent   \textbf{$\Lookup, \Next$ (\Cref{alg:AMAlgoRead}): } $\Lookup$ and $\Next$ operations additionally check if \StartCount is negative, to know if the list has been moved. 
        Subsequently, $\Lookup$ traverses each sublist until it reaches $\subtail$.
        If an matching key is found and is not \textit{stale} (i.e., the \StartCount is not negative) then it is added to the set of items that are returned by \Lookup. 
        If stale, the result of $\Lookup$ is fetched from the machine where the item is currently active.       

        In the case of $\Next$ operation, a distributed list traversal additionally requires handling of next pointer when encountering a $\subtail$ as the immediate next node. When not nil, this next pointer points to the $\subhead$ of the next \sublist, and so the operation delegates the request as $\Next$(\textit{$\subhead$}) to the server given by the server ID parameter of the next pointer in the $\subtail$.

\noindent \textbf{$\Move$  (\Cref{alg:AMAlgoLoadBalanceMove}): } $\Move$ operation moves a \sublist from a server $S_1$  to a server $S_2$. 
First, the current \EndCount value of the \sublist is stored in $\counterTemp$ (Line \ref{alg:AMAlgoLoadBalanceMoveCounterTemp}), before attempting to move the entire \sublist one node at a time.   
By the property of \offset discussed above, when all elements in the \sublist are copied, and $\StartCount- \counterTemp = \offset$, it indicates that no update operations overlapped with the $\Move$ operation. In such a scenario, the copy of the list on $S_1$ and $S_2$ are identical at Line \ref{alg:AMAlgoLoadBalanceMoveCAS}. Hence, $\Move$ completes successfully and sets \StartCount to $-\infty$, thereby forcing \StartCount to stay negative after any future increments. From this point onwards, client-visible operations for the \sublist will observe \StartCount to be negative and will delegate their requests to $S_2$.
If the \textit{CAS} in Line \ref{alg:AMAlgoLoadBalanceMoveCAS} fails, $\Move$ operation is aborted (by deleting the moved nodes from $S_2$) and restarted (by Lines \ref{alg:AMAlgoLoadBalanceMoveSLDelete} and \ref{alg:AMAlgoLoadBalanceMoveRetry}, respectively) .
        
\noindent \textbf{$\SwitchLB$ (\Cref{alg:AMAlgoLoadBalanceMove}): } $\SwitchLB$ operation switches the ownership of a \sublist (say $SL_2$) to a moved server (say $S_2$). To achieve this, $S_1$ de-links its \sublist from the list traversal, by making the $\subtail$ of its preceding \sublist point to the moved version in $S_2$ (Line \ref{alg:AMAlgoLoadBalance:switchPrecedingSubTail}), and requests $S_2$ to add $SL_2$ to its $\Registry$, thereby enabling $S_2$ to serve $\Lookup$ operations of $SL_2$ (Line \ref{alg:AMAlgoLoadBalance:switchRequest}).
After this, $S_1$ has to only wait until all clients lose references to its stale nodes (Line \ref{alg:AMAlgoLoadBalance:switchWait}) and then safely remove $SL_2$ from its $\Registry$ and physical memory.

\noindent \textbf{$\tt{Delegations}$: } 
 Delegations are messages sent when switching ownership of a \sublist from $S_1$ to $S_2$. 
During $\SwitchLB$ operation, $S_2$ is the server containing the active references of the \sublist. 
However, requests may arrive at $S_1$, as clients do not know about the $\Move$. Delegations forward these requests to $S_2$.
For $\Next$, $\InsertAfter$ and $\Delete$ operations, the delegation messages have client information attached, so $S_2$ can perform the corresponding operation and return the response directly to the client. In the case of $\Lookup$, the matching node for the requested \sublist is sent back to $S_1$, which in turn returns it to the client as part of its set of matching nodes. This maintains the $\Lookup$ property that each server sends one response to the client, containing a set of key-matching nodes.
\begin{algorithm}
\scriptsize
\DontPrintSemicolon
\caption{$\Split$ in \AbortingMoveMethod Protocol}\label{alg:AMAlgoLoadBalanceSplit}
\Fn{($\ItemRef$,$\ItemRef$) \textbf{$\Split$}($\ItemRef$ SL, $\ItemRef$ node)} {
    $ST \gets new\ \dummyNode()$\;
    $SH \gets new\ \subhead()$\;
    $ST.next \gets SH$\;
    \Repeat{cas(\&node.next, next, ST)\label{alg:AMAlgoLoadBalanceSplitCAS}} { 
        $next \gets node.next$\;
        $SH.next \gets next$\;
    }
    $curr \gets SH$\;
    $\newStartCount \gets new\ \AtomicInteger(0)$\;
    $\newEndCount \gets new\ \AtomicInteger(0)$\;
    \Repeat{$prev.key = \subtail$} {
        $prev \gets curr$\;
        $curr.\StartCount \gets \&\newStartCount$\; \label{alg:AMAlgoLoadBalanceSplitStartCount}
        $curr.\EndCount \gets \&\newEndCount$\; \label{alg:AMAlgoLoadBalanceSplitEndCount}
        $curr \gets curr.next$\;
    }
    \label{alg:AMAlgoLoadBalanceSplitCounterUpdateFinished}
    \Repeat{$a_1 + a_2 = \Registry.get(SL).getOffset()$ \label{alg:AMAlgoLoadBalanceSplitOffsetCompute}} {
        $a_1 \gets \newStartCount - \newEndCount$\;
        $a_2 \gets node. \StartCount \PArrow get() - node. \EndCount \PArrow get()$\;
    }
    $\Registry$.add(SH, \{\newStartCount, \newEndCount, $a_1$, ST\})\; \label{alg:AMAlgoLoadBalanceSplitAddReg}
    $\Registry$.get(SL).setOffset($a_2$)\; \label{alg:AMAlgoLoadBalanceSplitUpdateReg}
    $ST.key \gets \subtail$\; \label{alg:AMAlgoLoadBalanceSplitTruncate}
    return $(ST,SH)$\;
}

\end{algorithm}
\begin{algorithm}
\scriptsize
\DontPrintSemicolon
\caption{Updates in \AbortingMoveMethod Protocol}\label{alg:AMAlgoUpdate}
\Fn{$\ItemRef$ \textbf{$\InsertAfter$}($\ItemRef$ prev, $\Key$ key)}{
    $prev.\StartCount\PArrow increment()$\; \label{alg:AMAlgoUpdateInsertStartCount}
    \If{$prev.\StartCount\PArrow get() < 0$\label{alg:AMAlgoUpdateInsertCheck}} {
        Send DelegateInsertAfter(prev.$\newLocation$, key, clientInfo) to prev.$\newLocation$.getSId()\;\label{alg:AMAlgoUpdateInsertDelegate}
        return \;
    }
    \Repeat{$\tt{rdcss}$(\&prev.isDeleted, false, \&prev.next, temp, newItem)\label{alg:AMAlgoUpdateInsertCAS}}{
    \iffullalg
    \If{prev.isDeleted} {
        return ``Failed : Node not found"\;
    }
    
    $temp \gets prev.next$\;
    $newitem.next \gets temp$\;
    $newitem.newLocation \gets prev.newLocation$\;
    $newitem.key \gets key$\;
    $newItem.isDeleted \gets false$\;
    $newItem.\StartCount \gets prev.\StartCount $\;
    $newItem.\EndCount \gets prev.\EndCount $\;
    \else
    Create a $newitem$ of given $key$ and copy all other item values from $prev$, similar to Lines \ref{alg:BaseAlgoUpdateInsertLoopContentStart} to \ref{alg:BaseAlgoUpdateInsertLoopContentEnd} of \Cref{alg:BaseAlgUpdate}
    \fi
    }
    \label{alg:AMInsertBeforeEndCount}
    $prev.\EndCount\PArrow increment()$\;\label{alg:AMAlgoUpdateInsertEndCount}
    return newItem\;
}

\Fn{\textbf{$\Delete$}($\ItemRef$ prev)}{
    $prev.\StartCount\PArrow increment()$\;\label{alg:AMAlgoUpdateDeleteStartCount}
    \If{$prev.\StartCount\PArrow get() < 0$ \label{alg:AMAlgoUpdateDeleteCheck}} {
        Send DelegateDelete(prev.$\newLocation$, clientInfo) to prev.$\newLocation$.getSId() \label{alg:AMAlgoUpdateDeleteDelegate}\;
        return \;
    }
    \Repeat{cas(\&prev.isDeleted, false, true)\label{alg:AMAlgoUpdateDeleteCAS}} {
        \If{prev.isDeleted} {
        return ``Failed : Node not found"\;
        }
    }
    \label{alg:AMDeleteBeforeEndCount}
    $prev.\EndCount\PArrow increment()$\; \label{alg:AMAlgoUpdateDeleteEndCount}
    return ``Success"\;
}

\end{algorithm}
\begin{algorithm}
\scriptsize
\DontPrintSemicolon

\caption{$\Lookup$, $\Next$ in \AbortingMoveMethod Protocol}\label{alg:AMAlgoRead}
\Fn{$\ItemRef$[] $\Lookup$($\Key$ key)}{
    $result \gets nil$\;
    \For{(SH in $\Registry$.getSubheads())} {
        $result \gets result \cup (LookupSublist(SH, key))$\;
    }
    return result\;
}
\Fn{$\ItemRef[]$\ $\tt{LookupSublist}$($\ItemRef$ head, $\Key$ key)}{
    $resultSet \gets nil$\;
    \If{$curr.\StartCount\PArrow get() < 0$\label{alg:AMAlgoReadLookupDelegateCheck1}}{ 
        return SendDelegateLookup(head.$\newLocation$, key) to head.$\newLocation$.getSId()\; \label{alg:AMAlgoReadLookupDelegate1}
    }
    $curr \gets head.next$\;
    \While{$curr.key \neq \subtail$}{
    \If{curr.key = key and  curr.isDeleted = false and $curr.\StartCount\PArrow get()$ $\ge$ 0 \label{alg:AMAlgoReadLookupValidityCheck}}{
        $resultSet.add(curr)$\;
    }
    $curr \gets curr.next$\;
    }
    \If{$curr.\StartCount\PArrow get() < 0$ \label{alg:AMAlgoReadLookupDelegateCheck2}}{
        return SendDelegateLookup(head.$\newLocation$, key) to head.$\newLocation$.getSId()\; \label{alg:AMAlgoReadLookupDelegate2}
    }
    return resultSet\;
}

\Fn{$\ItemRef$ \textbf{$\Next$}($\ItemRef$ prev)} {
    $curr \gets prev$\;
    \Repeat{$curr.key \notin \{\dummyNode, \subhead\}$ \label{alg:AMAlgoReadNextElement} and $curr.isDeleted = false$} {
        $curr \gets curr.next$\; 
    }
    \If{$curr.\StartCount\PArrow get() < 0$} { \label{alg:AMAlgoReadNextDelegateCheck}
        Send DelegateNext(prev.$\newLocation$, clientInfo) to prev.$\newLocation$.id\; \label{alg:AMAlgoReadNextDelegateByMove}
        return\;
    }
    \If{$curr.key = \subtail$} {
    \If{$curr.next.getSId() = myId$} {
        return $\Next$(curr.next)\; \label{alg:AMAlgoReadNextSH}  
    }
    Send DelegateNext(curr.next, client) to curr.next.getSId()\;
    \label{alg:AMAlgoReadNextDelegateByST}
    return \;
    }
    return curr\;
}
\end{algorithm}
\begin{algorithm}
\scriptsize
\DontPrintSemicolon
\caption{$\Move$ in \AbortingMoveMethod Protocol }\label{alg:AMAlgoLoadBalanceMove}

\Fn {$\Move$($\ItemRef$ head, $\tt{Integer}$ newServerId)} {
    $\counterTemp \gets head.\EndCount\PArrow get() + \Registry.get(head).getOffset()$\; \label{alg:AMAlgoLoadBalanceMoveCounterTemp}
    $remoteSH \gets \tt{MoveItem}(nil, head, newServerId)$\;
    $curr \gets head$\;
    $remoteRef \gets remoteSH$\;
    \Repeat{$curr.key = \subtail$}{
        $curr \gets curr.next$\;
        $remoteRef \gets \tt{MoveItem}(remoteRef, curr, newServerId)$\;
    }
    \eIf{cas(\&head.\StartCount, \counterTemp, $-\infty$)\label{alg:AMAlgoLoadBalanceMoveCAS}} {
        $\SwitchLB$(head, newServerId, remoteSH)\;
    } {
        Instruct the server with server ID as newServerId, to delete the moved nodes starting from remoteSH, and wait for its acknowledgement\;\label{alg:AMAlgoLoadBalanceMoveSLDelete}
        $\Move$(head, newServerId)\;\label{alg:AMAlgoLoadBalanceMoveRetry}
    }
}

\Fn{$\ItemRef$ $\tt{MoveItem}$($\ItemRef$ remotePrevRef, $\ItemRef$ currRef, $\tt{Integer}$ newServerId)} {
$response \gets Send Move(remotePrevRef, currRef.getItem())$ to newServerId\;
$currRef.\newLocation \gets response$\;
return response\;
}

\Fn{\textbf{$\SwitchLB$}($\ItemRef$ head, $\tt{Integer}$ newServerId, $\ItemRef$ remoteSH) \label{alg:AMAlgoLoadBalance:switchStart}} {
    $ST \gets \Registry.get(head).getPrevSubtail()$\;
    \eIf{$ST.\StartCount\PArrow get() < 0$} {
        Request responsible server from ST.\newLocation.getSId() to update next of ST.newLocation to remoteSH and wait for acknowledgement\;
    } {
    $ST.next \gets remoteSH$\; \label{alg:AMAlgoLoadBalance:switchPrecedingSubTail}
    }
    $response \gets $Send $\SwitchLB$(head.\newLocation, ST) to newServerId\; \label{alg:AMAlgoLoadBalance:switchRequest}
    Wait for $\theta$ time\; \label{alg:AMAlgoLoadBalance:switchWait}
    $\Registry$.remove(head)\; \label{alg:AMAlgoLoadBalance:switchRemoveReg}
    \iffullalg
    $curr \gets head$ \Comment*[r]{Mark \sublist for deletion}\label{alg:AMAlgoLoadBalance:switchMarkForDeletion}
    \Repeat{$curr = \subtail$}{
        $curr.isDeleted \gets true$\;
        $curr \gets curr.next$\;
    } \label{alg:AMAlgoLoadBalance:switchEnd}
    \else
    Mark \sublist for deletion \;\label{alg:AMAlgoLoadBalance:switchMarkForDeletion}
    \fi
}

\Fn{$\ItemRef$ $\tt{MoveReceive}$($\ItemRef$ prev, $\tt{Item}$ item)}{
    \eIf{item.key = $\subhead$}{
        $SH \gets new\ \subhead()$\; \label{AMMoveReceiveSHStart}
        $ST \gets new\ \subtail()$\;
        $\newStartCount \gets new\ \AtomicInteger(0)$\;
        $\newEndCount \gets new\ \AtomicInteger(0)$\;
        $SH.\StartCount, ST.\StartCount \gets \&\newStartCount$\;
        $SH.\EndCount, ST.\EndCount \gets \&\newEndCount$\;
        $SH.next \gets ST$\; 
        return SH\;\label{AMMoveReceiveSHEnd}
    }{
        \eIf{item.key = $\subtail$}{
            $(prev.next).next \gets item.next$\;
            return prev.next\;
        }{
            $newItem \gets \InsertAfter(prev, item.key)$\;
            return newItem\;
        }
    }
}

\Fn{$\tt{SwitchReceive}$($\ItemRef$ head, $\ItemRef$ prevSubtail)} {
$\Registry$.add(head, \{head.\StartCount, head.\EndCount, 0, prevSubTail\})\;
\label{alg:AMAlgoLoadBalanceReceiveSwitchLine}
return ``Success"\;
}
\end{algorithm}
\section{Temporary Replication (\TempReplMethod) Protocol}\label{sec:TRSection}

In \TempReplMethod protocol,
instead of aborting the $\Move$ operation as in the \AbortingMoveMethod protocol, we let the update operations \textit{help} complete the $\Move$ in a way that does not impact the latency of the update request, as perceived by the client.

\subsection{General Idea}\label{subsec:TRGeneralIdea}
Consider the case where a \sublist $SL$, as described in \Cref{fig:sampleList} is being moved from server $S_1$ to server $S_2$.
Similar to previous methods, \TempReplMethod  moves the \sublist in order, i.e., first, the element $A$, then $B$ and so on. If there are updates done at an item that was moved, then that change is communicated to $S_2$ by a \textbf{$\Replicate$} message. 
Until the $\SwitchLB$ operation, all client requests ($\Lookup$, $\InsertAfter$, etc.) are completed by $S_1$. 
$\SwitchLB$ to $S_2$ occurs when all $\Replicate$ messages have been \textit{replayed} in $S_2$, at which point the \sublist $SL$ of $S_1$ and $S_2$ are momentarily identical.
(Detecting a point of no pending $\Replicate$ messages is discussed in \Cref{subsec:TRAlgorithm}.)
\TempReplMethod and \AbortingMoveMethod share the same $\Split$ and $\SwitchLB$ operations. However, the $\Move$ operation in \TempReplMethod becomes more involved to accommodate $\Replicate$ messages.

\textbf{Key Challenge in $\TempReplMethod$ protocol: }
Continuing with the example above, suppose  
the $\Move$ operation is moving element $B$ and a client invokes $\InsertAfter$(Ref(D), G) (respectively, $\Delete$(Ref(D))), then $G$ will be inserted (respectively, $D$ will be deleted) on $S_1$. The $\Move$ operation will move $D$ while moving the elements one at a time. On the other hand, if there is $\InsertAfter$(Ref(A), G) then $\Move$ operation will not move the new element $G$. Instead, it will be transferred by a \textit{$\Replicate$} message. It is possible that there are concurrent $\InsertAfter$(Ref(A), $N_1$) and $\InsertAfter$(Ref(A), $N_2$) calls. If this happens, the inserting order of $N_1$ and $N_2$ must be identical on $S_1$ and $S_2$, even if the corresponding $\Replicate$ messages reach $S_2$ in an arbitrary order. 

In the above example, $\InsertAfter$(Ref(A), $N_1$) and $\InsertAfter$(Ref(A), $N_2$) can be processed as soon as they are received. However, if there is an $\InsertAfter$(Ref($N_1$), $N_3$) then this must be processed after $\InsertAfter$(Ref(A), $N_1$). Likewise, if there is a $\Delete$(Ref($N_1$)) then it must be processed after $\InsertAfter$(Ref(A), $N_1$). 

    Note that the above challenge cannot be overcome with just FIFO channels, because a thread can be preempted just after the \textit{RDCSS} operation (that adds the element), which will delay the sending of the corresponding $\Replicate$ message.
\subsection{Data Structure of an Item}\label{subsec:TRDS}
    On top of the structure in \Cref{subsec:AMDS}, the \TempReplMethod requires a logical timestamp and a server ID (of inserting server) as additional members of its structure. The logical timestamp helps order insertion events that take place in a \sublist and the server ID, along with the logical timestamp, helps to uniquely identify a node in the remote server.

\iffullalg
\begin{algorithm}
\scriptsize
\DontPrintSemicolon
\caption{\TempReplMethod Protocol Data  Structure}\label{alg:TRAlgoStruct}
\Struct{Item} {
\iffullalg
\Key key\;
\ItemRef next\;
\AtomicInteger \&\StartCount\;
\AtomicInteger \&\EndCount\;
\else
Include key, isDeleted, next, \newLocation, \StartCount and \EndCount from \Cref{alg:AMAlgoStruct}\;
\fi
$\tt{Integer}$ ts, \Comment*[r]{logical timestamp}
$\tt{Integer}$ sId \Comment*[r]{ID of the inserting server}

\iffullalg
\ItemRef \newLocation \Comment*[r]{Remote node location information}
\else
\fi
}
$\Registry$: \textit{Same as Line \ref{alg:AMAlgoStructRegistry} in \Cref{alg:AMAlgoStruct}\;
Each server additionally has a logical clock \LC that gets incremented during insertions\;}
\end{algorithm}
\fi
\subsection{Algorithm}\label{subsec:TRAlgorithm}
The pseudo code of \TempReplMethod protocol shares the same pseudo code as \AbortingMoveMethod protocol for $\Split$, $\Lookup$, $\Next$ and $\SwitchLB$ operations. The pseudo code in Algorithms \ref{alg:TRAlgoUpdateOps}, \ref{alg:TRAlgoLoadBalanceMove} and \ref{alg:TRAlgoReplicateUpdateReceives} describe the following operations that remain:

\noindent  \textbf{$\InsertAfter, \Delete$ (\Cref{alg:TRAlgoUpdateOps}): } 
The update operations of \TempReplMethod protocol is the same as in the \AbortingMoveMethod protocol in the absence of a concurrent $\Move$. After returning a response to the client, in \TempReplMethod protocol, the updates
will also send $\Replicate$ messages (Lines \ref{alg:ReplicateMsgInsert} and \ref{alg:ReplicateMsgDelete}) of the update event when there is an ongoing $\Move$. During the $\Move$, the \EndCount increment is instead done after the update replay is successful (Lines \ref{alg:ReplicateInsertAfterEndC} and \ref{alg:ReplicateDeleteEndC}). To support replay of inserts in the right order, $\InsertAfter$ operation assigns a logical timestamp to the item being inserted (Line \ref{alg:TRAlgoUpdateTS}). Note that if \textit{CAS} fails, then the $\InsertAfter$ operation obtains a new timestamp. This ensures that if $\InsertAfter$(prev, $N_1$), $\InsertAfter$(prev, $N_2$) are invoked, and $N_1$ is inserted before $N_2$, then the timestamp of $N_1$ will be lower than that of $N_2$. These timestamps are used during $\Move$, to help $S_2$ insert the items in the same order as $S_1$. 

    \noindent \textbf{$\Move$ (\Cref{alg:TRAlgoLoadBalanceMove}): } {The $\Move$ operation at a server $S_1$ calls MoveItem at Line \ref{alg:TRAlgoMoveRecurse}, to recursively traverse and move the \sublist to a server $S_2$. It will return the remote $\subhead$ after $\Move$, to begin $\SwitchLB$ phase.}
    During the $\Move$ process, concurrent update operations that occur on the moved items of the \sublist, have to be replicated, before the $\Move$ phase is completed. In order to achieve this, $S_1$ sends $\Replicate$ messages (described later) to $S_2$ for each update operation. $S_2$ will use the update timestamp to replay the replicated event and reconstruct the \sublist present in $S_1$. {Thus, the $\Move$ operation formally completes when both the recursive traversal has reached the $\subtail$ of the \sublist and additionally, $S_2$ has communicated the successful replay of each $\Replicate$ message $S_1$ had sent to it.}
    When the \textit{CAS} of Line \ref{alg:TRAlgoLoadBalanceMoveCAS} first 
    detects this point, \StartCount is set to {-$\infty$} by that \textit{CAS} (as in the \AbortingMoveMethod protocol) to complete the $\Move$ (Line \ref{alg:TRMoveWaitforReplays}) and begin the $\SwitchLB$ operation (Line \ref{alg:TRSwitchCall}).
    
    \noindent \textbf{$\Replicate$ (\Cref{alg:TRAlgoReplicateUpdateReceives}): } $\Replicate$ is not a new operation, but a type of message sent upon doing operations at a node marked {with a $\newLocation$ instead.}
    {$\Replicate$ messages ensure that updates concurrent with a $\Move$ do not go unnoticed by $S_2$.}
    $S_2$ then picks up on these updates and replays the events to reconstruct the \sublist in $S_1$. 
    Replaying $\ReplicateDelete$ requests is as simple as first finding the corresponding node using the server ID and timestamp of the node, and then setting $isDeleted$ of the found node to be true. 
    However, for replaying $\ReplicateInsertAfter$ requests, there are two challenges : (1) there can be multiple $\ReplicateInsertAfter$ requests arrived and processed simultaneously, (2) these requests might not be processed in timestamp order of the nodes. Hence, the insert replays need to insert the nodes at its correct position in the \sublist, using its timestamp.
    Intuitively, $\ReplicateInsertAfter$ is handled as follows: First, observe that $\ReplicateInsertAfter$ identifies the previous node (call the node $A$) where $\InsertAfter$ was performed. 
    If two items, say $N_1$ and $N_2$, are inserted after node $A$, then the timestamp identifies which element was inserted first. Specifically, if timestamp of $N_1$ is less than that of $N_2$, that means that $N_1$ was inserted first and then $N_2$. This observation forces certain properties of the linked list observable through their timestamp (as explained in \Cref{TRAppendix}). This is used to replay the insertion on $S_2$. Note that this $\Replicate$ message does not impact client latency. 


    

\begin{algorithm}
\scriptsize
\DontPrintSemicolon
\caption{Updates in \TempReplMethod Protocol}\label{alg:TRAlgoUpdateOps}
\Fn{$\ItemRef$ \textbf{$\InsertAfter$}($\ItemRef$ prev, $\Key$ key)}{
\iffullalg
    prev$\PArrow$\StartCount.increment()\;
    \If{$prev.\StartCount\PArrow get() < 0$} {
        Send DelegateInsertAfter(prev.\newLocation,item, client) to prev.\newLocation.getSId()\;
        return \;
    }
    \Repeat{$\tt{rdcss}$(\&prev.isDeleted, false, \&prev.next, temp, newItem)}{
    \If{prev.isDeleted} {
        return ``Failed : Node not found"\;
    }
    
    $temp \gets prev.next$\;
    $newitem.next \gets temp$\;
    $newitem.\newLocation \gets prev.\newLocation$\;
    $newitem.key \gets key$\;
    $newItem.isDeleted \gets false$\;
    $newItem.\StartCount \gets prev.\StartCount $\;
    $newItem.\EndCount \gets prev.\EndCount $\;
    $newItem.\TS \gets \LC.IncrementAndGet()$\;\label{alg:TRAlgoUpdateTS}
    $newItem.sId \gets myId$\;
    }
    Send newItem to client\;
    \else
    Lines \ref{alg:AMAlgoUpdateInsertStartCount} to \ref{alg:AMInsertBeforeEndCount} of \Cref{alg:AMAlgoUpdate} and while creating newItem, fetch TS by incrementing logical clock of server and set sId as myId of server\; \label{alg:TRAlgoUpdateTS}
    \fi
    \If{$newItem.\newLocation.getSId() \neq myId$} {
        Send $\ReplicateInsertAfter$(prev.getItem(), newItem.getItem(), newItem) to newItem.$\newLocation$.getSId()\; \label{alg:ReplicateMsgInsert}
        return \;
    }
    $prev.\EndCount\PArrow increment()$\;

}
\Fn{$\tt{InsertReplayReceive}$($\ItemRef$ remoteRef, $\ItemRef$ localRef)} {
    $localRef.\newLocation \gets remoteRef$\;
    $localRef.\EndCount\PArrow increment()$\; \label{alg:ReplicateInsertAfterEndC}
    $\counterTemp \gets localRef.\EndCount\PArrow get()$\;
}
\Fn{\textbf{$\Delete$}($\ItemRef$ prev)}{
    \iffullalg
    prev$\PArrow$\StartCount.increment()\;
    \If{$prev\PArrow\StartCount < 0$} {
        Send DelegateDelete(prev.$\newLocation$, client) to prev.$\newLocation$.getSId()\;
        return \;
    }
    \Repeat{cas(\&prev.isDeleted, false, true)} {
        \If{prev.isDeleted} {
        return ``Failed : Node not found"\;
        }
    }
    Send ``Success" to client\;
    \else
    Lines \ref{alg:AMAlgoUpdateDeleteStartCount} to \ref{alg:AMDeleteBeforeEndCount} of \Cref{alg:AMAlgoUpdate}\;
    \fi
    \If{$prev.\newLocation.getSId() \neq myId$} {
        Send $\ReplicateDelete$(prev.getItem(), prev) to prev.\newLocation.getSId()\; \label{alg:ReplicateMsgDelete}
        return\;
    }
    $prev\PArrow\EndCount\PArrow increment()$\;
}

\Fn{$\tt{DeleteReplayReceive}$($\ItemRef$ localRef)} {
    $localRef.\EndCount\PArrow increment()$\; \label{alg:ReplicateDeleteEndC}
    $\counterTemp \gets localRef.\EndCount\PArrow get()$\;
}

\end{algorithm}

\begin{algorithm}
\scriptsize
\DontPrintSemicolon
\caption{$\Move$ in \TempReplMethod Protocol }\label{alg:TRAlgoLoadBalanceMove}

    \Fn {$\Move$($\ItemRef$ head, $\tt{Integer}$ newServerId)} {
    $remoteSH \gets \tt{MoveItem}(nil,head, newServerId, SH, nil)$\;\label{alg:TRAlgoMoveRecurse}
    $\SwitchLB$(head, newServerId, remoteSH)\;\label{alg:TRSwitchCall}
}

\Fn{$\ItemRef$ $\tt{MoveItem}$($\ItemRef$ remoteRef, $\ItemRef$ currRef, $\tt{Integer}$ newServerId, $\ItemRef$ SH, $\ItemRef$ remoteSH)} {
    \If{currRef.$\newLocation$.getSId() != myId}{
        $\tt{MoveItem}$(remoteRef, currRef.next, newServerId, SH, remoteSH)\;
    }
    $tempItem \gets currRef.getItem()$\;
    $response \gets Send Move(remoteRef, tempItem)$ to newServerId\;
    $currRef.\newLocation \gets response$\;
    \If{$tempItem.isDeleted \neq currRef.isDeleted$} {
        Send $\ReplicateDelete$ (currRef.getItem(), currRef) to currRef.$\newLocation$.getSId()\;        
    }
    \eIf{currRef.key = $\subtail$} {
        $\counterTemp \gets SH.\EndCount\PArrow get() + \Registry.get(SH).getOffset()$\;
        \While{!cas(\&currRef.\StartCount, \counterTemp, -$\infty$)\label{alg:TRAlgoLoadBalanceMoveCAS}}{
            \Comment*[r]{Wait for replays}\label{alg:TRMoveWaitforReplays}
        }
        return remoteSH\;

    } {
        \If{remoteSH = nil} {
            remoteSH = response\;
        }
        $\tt{MoveItem}$(response, currRef.next, newServerId, SH, remoteSH)\;
    }
}
\Fn{$\ItemRef$\ $\tt{MoveReceive}$($\ItemRef$ prev, $\tt{Item}$ item)}{
    $curr \gets prev$\;
    \eIf{item.key = $\subhead$} {
        Lines \ref{AMMoveReceiveSHStart} to \ref{AMMoveReceiveSHEnd} of \Cref{alg:AMAlgoLoadBalanceMove}\;
    } {
        \Repeat{$cas(\&currPrev.next, curr, newItem)$}{
            \Repeat{$curr.ts < prev.ts$} {
                $currPrev \gets curr$\;
                $curr \gets currPrev.next$\;
            }
            \eIf{item.key = $\subtail$} {
                $curr.next \gets item.next$\;
                return newItem\;
            } {
                \iffullalg
                $newItem \gets new ItemRef(item)$\;
                $newItem.\newLocation = nil$\;
                $newItem.\StartCount \gets currPrev.\StartCount $\;
                $newItem.\EndCount \gets currPrev.\EndCount $\;
                $newItem.next \gets curr$\;
                \else
                Create a $newitem$ by copying values from $currPrev$ and $item$ as done in Lines \ref{alg:BaseAlgoUpdateInsertTemp} to \ref{alg:BaseAlgoUpdateInsertLoopContentEnd} of \Cref{alg:BaseAlgUpdate}\;
                \fi
            }
        }
    }
    return newItem\;
}
\end{algorithm}
\begin{algorithm}
\scriptsize
\DontPrintSemicolon
\caption{Replays in \TempReplMethod Protocol}\label{alg:TRAlgoReplicateUpdateReceives}
\Fn{($\ItemRef$, $\ItemRef$) $\tt{{\ReplicateInsertAfter}Receive}$($\tt{Item}$ prevItem, $\tt{Item}$ item, $\ItemRef$ oldLocation)} {
    $prev \gets prevItem.\newLocation$\;\label{alg:TRFindPrevStart}
    \While{$prev.ts \neq prevItem.ts$ or $prev.sId \neq prevItem.sId$ 
    }{
        $prev \gets prev.next$\;
        \If{$prev.key = \subtail$}{
            $prev \gets prevItem.\newLocation$\;
        }
    } \label{alg:TRFindPrevEnd}
    $curr \gets prev$\;
    \Repeat{$cas(\&currPrev.next, curr, newItem)$}{
    \Repeat{$curr.ts < item.ts \label{TRInsertReplayConditionForInsertSpot}$ 
        } {
            $currPrev \gets curr$\; \label{TRInsertReplaycurrPrevFinding}
            $curr \gets currPrev.next$\; \label{TRInsertReplaycurrFinding}
        }
        \iffullalg
            $newItem \gets new ItemRef(item)$\;
            $newItem.\newLocation = nil$\;
            $newItem.\StartCount \gets currPrev.\StartCount $\;
            $newItem.\EndCount \gets currPrev.\EndCount $\;
            $newItem.next \gets curr$\;  
        \else
            Create a $newitem$ by copying values from $currPrev$ and $item$ as done in Lines \ref{alg:BaseAlgoUpdateInsertTemp} to \ref{alg:BaseAlgoUpdateInsertLoopContentEnd} of \Cref{alg:BaseAlgUpdate}\;
        \fi
        }    
    return newItem, oldLocation \Comment*[r]{Sent as InsertReplay Message}
}

\Fn{($\ItemRef$) $\tt{{\ReplicateDelete}Receive}$($\tt{Item}$ prevItem, $\ItemRef$ oldLocation)} {
    \iffullalg
    $prev \gets prevItem.\newLocation$\;
    \While{$prev = nil$ or $prev.ts \neq prevItem.ts$ or $prev.sId \neq prevItem.sId$
    }{
        $prev \gets prev.next$\;
        \If{$prev.key = \subtail$}{
            $prev \gets prevItem.\newLocation$\;
        }
    }
    \else
        Find $prev$ using Lines \ref{alg:TRFindPrevStart} to \ref{alg:TRFindPrevEnd}\ in \Cref{alg:TRAlgoReplicateUpdateReceives};
    \fi
    $prev.isDeleted \gets true$\;
    return oldLocation \Comment*[r]{Sent as DeleteReplay Message}
    
}

\end{algorithm}

\section{Discussion} \label{sec:Discussion}

In this section, we discuss some of the questions raised by the transformation of the linked list from a concurrent version to a distributed multi-machine list. 

\paragraph{When should $\Split$ and $\Move$ be invoked? } 
The decision of when these operations should be invoked would depend on specific needs of the clients/processes that utilize the list. To provide context,
$\Split$ is an $O(k)$ operation where $k$ is the size of the \sublist. Furthermore, if the \sublist grows too big, it will increase the duration of $\Split$ and $\Move$ operations. Additionally, having smaller \sublists permits the possibility of increasing concurrency in the $\Lookup$ operation. In this context, we recommend that the $\Split$ operation is invoked frequently so that each \sublist is small.



\paragraph{Comparison of \AbortingMoveMethod and \TempReplMethod }
    Both the protocols preserve the \textit{client-visible} linked list properties arising from its base implementation and in no circumstance, blocks client-visible operations on the list. 
    While the two protocols share the same $\Split$ method, the $\Move$ operations have their own benefits arising from their difference in implementation. The \AbortingMoveMethod protocol terminates the quickest when the $\Move$ is invoked on a \sublist at a period when the sublist is unlikely to undergo updates as explained in \Cref{AMAppendix}. Unlike \AbortingMoveMethod, the \TempReplMethod protocol is guaranteed to terminate for large \sublists on a momentary write-free period after the replay of the last concurrent update, as explained in \Cref{TRAppendix}. On the other hand,  \AbortingMoveMethod has the potential to move nodes in batches, thereby completing the overall $\Move$ of small \sublists faster than  \TempReplMethod . \TempReplMethod requires assigning $\newLocation$ for each node before moving the next node and hence moves only one at a time.

\paragraph{Can the transformation protocols be extended to sorted linked lists?}
Yes, and to search faster in a sorted linked list, our $\Split$ protocol can be modified to additionally store the minimum and maximum key values that will be stored in a particular \sublist in its $\Registry$.
To save space, the algorithm for a sorted list is provided in \Cref{sec:appendixsortedlistsection}.



\paragraph{How generic are the transformation protocols in order to apply them to other linked list data structures?}
The correctness proof of \AbortingMoveMethod protocol assumes a singly-linked list, and that any modification to a \sublist is contained within \StartCount and \EndCount increments of the corresponding \sublist. \TempReplMethod protocol is an illustrated optimization, where any such modification during a $\Move$ from servers $S1$ to $S2$ is reported to $S2$, to complete the \sublist reconstruction. As illustrated in \cite{Valois}, a context-aware unordered list can be extended to dictionary abstract data types such as sorted lists, hash tables, skip lists and binary search trees.

Due to reasons of space, we discuss some additional aspects such as 
\begin{inparaenum}
    \item Making $\Lookup$ return the first instance of a matching key 
    \item Distinction from graph partitioning
    \item Fault handling
\end{inparaenum}
 in \Cref{sec:DiscussionAppendix}.

\section{Related Work} \label{sec:RelatedWork}


Linked lists are fundamental data structures. A sequential linked list is easy to implement. However, a concurrent link list  is challenging to design, and hence, the concurrent linked list was studied extensively. 
There exists many lock-free implementations - \cite{Valois,harris2001pragmatic,Ruppert}; and a few wait-free implementations - \cite{zhangWaitFreeLinkedList,Timnat}. 
The first lock-free implementation using atomic CAS primitive (for non-distributed list) is due to Valois \cite{Valois}. This implementation is also context-aware due to the support of $\Next$ operation and insertions at current cursor position, and our approach can obtain a corresponding distributed implementation.

Some papers focused on optimizing the cost of operations
in addition to correctness and progress guarantees established in the aforementioned papers.
An example is a lock-free list implementation due to Fomitchev and Ruppert \cite{Ruppert}, with worst-case linear amortized cost. 
\cite{harris2001pragmatic} and \cite{Ruppert} 
perform de-linking of marked nodes at the time of read operations ($\tt{Contains}$). Our transformation operations assume such de-linking is paused during their executions. 
In spite of this, it is possible to transform the algorithm in \cite{harris2001pragmatic} to distributed linked-list as shown in the appendix. 
A concurrent lock-free doubly-linked list implementation is given by Attiya and Hillel \cite{Attiya}, using a double CAS operation. {This doubly-linked list does not have a read operation.} It is implemented using only CAS by Sundell and Tsigas \cite{Tsigas}. {Distributing \cite{Tsigas} can be a future work to augment our singly-linked list transormation to a context-aware doubly-linked list.}

The aforementioned implementations typically consider that the sequences of entries in the linked list reside on different chunks of memory. To take advantange of entries that are instead on the same chunk,
Braginsky and Petrank \cite{chunkedlist} developed a concurrent linked list that maintains in each memory chunk, a certain number of entries within a defined minimum and maximum. Their lock-free approach shares some similarities with our approach on partitioning the list across multiple machines.

Another well-celebrated direction is to use an universal construction approach, e.g. Herlihy \cite{herlihyMethodology}, to convert sequential implementations to wait-free concurrent implementations.  However, the universal construction adds overhead and direct implementations are preferred to avoid that overhead. 

Aforementioned literature is all on concurrent implementations on a single machine. 
In regard to distributed implementations on multiple machines, the literature is rare. The mostly related work is due to Abe and Yoshiday \cite{Distributeddoublylinkedlist} where a distributed doubly-linked list is constructed. The presented strategy uses conflict detection and sequence numbers (with some assumptions) to break symmetries.  
It also guarantees atomic insert and delete operations and non-blocking lookup operations.
Although this work is related, there are fundamental differences, such as the construction is not context-aware, and the algorithm is only obstruction-free. 
It also does not support load balancing large linked lists unlike our approach.

Zhang {\it et al.}\cite{zhangWaitFreeLinkedList} provided an unordered linked list where insertions always occurred at the head position of the linked list. 
This list is not context-aware, as the inserts always occur at the head position. Thus, the result of any operation supported by \cite{zhangWaitFreeLinkedList} (Insert, Delete or Contains) can be uniquely determined by the current state of the list without regard to the operations performed by the same client earlier. By contrast, the list in this paper is context-aware, as it supports $\InsertAfter$ (insert at given position in the list) and $\Next$. 

To the best of our knowledge, ours is the first distributed lock-free implementation of a context-aware linked list. 


\section{Conclusion and Future Work} \label{sec:Conclusion}

The ability to partition and distribute a data structure across several machines is highly valuable to increase the throughput. For this reason, approaches such as distributed key value stores and distributed hash tables are widespread in theory and practice. Unfortunately, these data structures are not context-aware, i.e., the result of an operation only depends upon the existing state of the data structure and not on previous operations performed by the same client.

To overcome this limitation, this paper focused on  techniques to transform a shared memory \textit{context-aware} data structure into a fully distributed one, without blocking client-visible operations. 
We focused on the list data structure as our case study. We started with an unordered linked list. Then, we presented two protocols that allowed the list to both be distributed in a manner that preserves the properties of the client-visible operations, and also never block the client-visible operations.
The first protocol ($\AbortingMoveMethod$) treated the load balancing task as a low priority task that could be abandoned if there were conflicting updates, where as, the second protocol ($\TempReplMethod$) 
had \textit{help} from client-visible operations 
to eliminate any need for the load balancing task to be abandoned, 
 without increasing the latency of the client-visible operations.

The properties of the protocols are such that, when there are no ongoing Switch operations,
the behavior of the list is retained, with additional capabilities, such as supporting parallel client requests to different servers.
During the $\SwitchLB$ operation, a client-visible operation may be forwarded to the new server. Other than that, the performance of client-visible operations remains unaffected.

There are several possible future work for this work.
While the paper illustrates the transformation on a singly-linked list, 
we believe that, in a future work, it can be extended to other data structures such as abstract dictionary data types(as in \cite{Valois}), a doubly-linked list\cite{Distributeddoublylinkedlist}, and external binary search trees \cite{concurrentBST}, where each tree can be split into sub-trees that are defined by their root nodes and be distributed across machines.

This work only focused on partitioning the list. It can be extended to a partitioned and \textit{replicated} list using approaches such as primary-backup\cite{primaryBackup}. Additionally, this work focused on preserving properties of client-visible operations of the list. For example, the $\InsertAfter$, $\Delete$ and $\Next$ (in a sublist) operations were linearizable in the list prior to the transformation and this property was preserved 
{even after the} transformation. Thus, another future work could be answering the natural question of whether weakened client-visible operation guarantees such as causal or eventual consistency, would provide a substantial increase in performance in distributed data structures. 


\newpage
{
\bibliographystyle{plain}
\bibliography{ref}

@article{herlihyMethodology,
author = {Herlihy, Maurice},
year = {1990},
month = {02},
pages = {197-206},
title = {A methodology for implementing highly concurrent data structures},
volume = {25},
isbn = {0-89791-350-7},
journal = {ACM SIGPLAN Notices},
doi = {10.1145/99163.99185}
}

@InProceedings{zhangWaitFreeLinkedList,
author="Zhang, Kunlong
and Zhao, Yujiao
and Yang, Yajun
and Liu, Yujie
and Spear, Michael",
title="Practical Non-blocking Unordered Lists",
booktitle="DISC",
year="2013",
pages="239--253",
abstract="This paper introduces new lock-free and wait-free unordered linked list algorithms. The composition of these algorithms according to the fast-path-slow-path methodology, a recently devised approach to creating fast wait-free data structures, is nontrivial, suggesting limitations to the applicability of the fast-path-slow-path methodology. The list algorithms introduced in this paper are shown to scale well across a variety of benchmarks, making them suitable for use both as standalone lists, and as the foundation for wait-free stacks and non-resizable hash tables.",
isbn="978-3-642-41527-2"
}

@ARTICLE{hazardPointers,
  author={Michael, M.M.},
  journal={IEEE Transactions on Parallel and Distributed Systems}, 
  title={Hazard pointers: safe memory reclamation for lock-free objects}, 
  year={2004},
  volume={15},
  number={6},
  pages={491-504},
  doi={10.1109/TPDS.2004.8}}

@inproceedings{debra,
      title={Reclaiming memory for lock-free data structures: there has to be a better way}, 
      author={Trevor Brown},
      year={2015},
booktitle={PODC},
  pages={261--270},
      eprint={1712.01044},
      archivePrefix={arXiv},
      primaryClass={cs.DC}
}

@inproceedings{harris2001pragmatic,
  title={A pragmatic implementation of non-blocking linked-lists},
  author={Harris, Timothy L},
  booktitle={DISC},
  pages={300--314},
  year={2001},
  organization={Springer}
}

@InProceedings{MultiCAS,
author="Harris, Timothy L.
and Fraser, Keir
and Pratt, Ian A.",
title="A Practical Multi-word Compare-and-Swap Operation",
booktitle="DISC",
year="2002",
pages="265--279",
abstract="Work on non-blocking data structures has proposed extending processor designs with a compare-and-swap primitive, CAS2, which acts on two arbitrary memory locations. Experience suggested that current operations, typically single-word compare-and-swap (CAS1), are not expressive enough to be used alone in an efficient manner. In this paper we build CAS2 from CAS1 and, in fact, build an arbitrary multi-word compare-and-swap (CASN). Our design requires only the primitives available on contemporary systems, reserves a small and constant amount of space in each word updated (either 0 or 2 bits) and permits nonoverlapping updates to occur concurrently. This provides compelling evidence that current primitives are not only universal in the theoretical sense introduced by Herlihy, but are also universal in their use as foundations for practical algorithms. This provides a straightforward mechanism for deploying many of the interesting non-blocking data structures presented in the literature that have previously required CAS2.",
isbn="978-3-540-36108-4"
}

@inproceedings{Distributeddoublylinkedlist,
  author       = {Kota Abe and
                  Mikio Yoshiday},
  editor       = {Yong Liu and
                  Marinho P. Barcellos and
                  Jay R. Lorch and
                  Anyi Wang},
  title        = {Constructing distributed doubly linked lists without distributed locking},
  booktitle    = {P2P},
  pages        = {1--10},
  publisher    = {{IEEE}},
  year         = {2015},
  url          = {https://doi.org/10.1109/P2P.2015.7328521},
  doi          = {10.1109/P2P.2015.7328521},
  timestamp    = {Wed, 16 Oct 2019 14:14:52 +0200},
  biburl       = {https://dblp.org/rec/conf/p2p/AbeY15.bib},
  bibsource    = {dblp computer science bibliography, https://dblp.org}
}

@InProceedings{optimisticlocklist,
author="Heller, Steve
and Herlihy, Maurice
and Luchangco, Victor
and Moir, Mark
and Scherer, William N.
and Shavit, Nir",
title="A Lazy Concurrent List-Based Set Algorithm",
booktitle="OPODIS",
year="2006",
pages="3--16",
abstract="List-based implementations of sets are a fundamental building block of many concurrent algorithms. A skiplist based on the lock-free list-based set algorithm of Michael will be included in the JavaTM Concurrency Package of JDK 1.6.0. However, Michael's lock-free algorithm has several drawbacks, most notably that it requires all list traversal operations, including membership tests, to perform cleanup operations of logically removed nodes, and that it uses the equivalent of an atomically markable reference, a pointer that can be atomically ``marked,'' which is expensive in some languages and unavailable in others.",
isbn="978-3-540-36322-4"
}

@inproceedings{concurrentBST,
author = {Natarajan, Aravind and Mittal, Neeraj},
title = {Fast Concurrent Lock-Free Binary Search Trees},
year = {2014},
isbn = {9781450326568},
url = {https://doi.org/10.1145/2555243.2555256},
doi = {10.1145/2555243.2555256},
abstract = {We present a new lock-free algorithm for concurrent manipulation of a binary search tree in an asynchronous shared memory system that supports search, insert and delete operations. In addition to read and write instructions, our algorithm uses (single-word) compare-and-swap (CAS) and bit-test-and-set (SETB) atomic instructions, both of which are commonly supported by many modern processors including Intel~64 and AMD64.In contrast to existing lock-free algorithms for a binary search tree, our algorithm is based on marking edges rather than nodes. As a result, when compared to other lock-free algorithms, modify (insert and delete) operations in our algorithm work on a smaller portion of the tree, thereby reducing conflicts, and execute fewer atomic instructions (one for insert and three for delete). Our experiments indicate that our lock-free algorithm significantly outperforms all other algorithms for a concurrent binary search tree in many cases, especially when contention is high, by as much as 100\%.},
booktitle = {PPoPP},
pages = {317–328},
numpages = {12},
keywords = {binary search tree, lock-free algorithm, concurrent data structure},
}

@article{linearizability,
author = {Herlihy, Maurice P. and Wing, Jeannette M.},
title = {Linearizability: A Correctness Condition for Concurrent Objects},
year = {1990},
issue_date = {July 1990},
publisher = {Association for Computing Machinery},
address = {New York, NY, USA},
volume = {12},
number = {3},
issn = {0164-0925},
url = {https://doi.org/10.1145/78969.78972},
doi = {10.1145/78969.78972},
abstract = {A concurrent object is a data object shared by concurrent processes. Linearizability is a correctness condition for concurrent objects that exploits the semantics of abstract data types. It permits a high degree of concurrency, yet it permits programmers to specify and reason about concurrent objects using known techniques from the sequential domain. Linearizability provides the illusion that each operation applied by concurrent processes takes effect instantaneously at some point between its invocation and its response, implying that the meaning of a concurrent object's operations can be given by pre- and post-conditions. This paper defines linearizability, compares it to other correctness conditions, presents and demonstrates a method for proving the correctness of implementations, and shows how to reason about concurrent objects, given they are linearizable.},
journal = {ACM Trans. Program. Lang. Syst.},
month = {jul},
pages = {463–492},
numpages = {30}
}

@inbook{primaryBackup,
author = {Budhiraja, Navin and Marzullo, Keith and Schneider, Fred B. and Toueg, Sam},
title = {The Primary-Backup Approach},
year = {1993},
isbn = {0201624273},
publisher = {ACM Press/Addison-Wesley Publishing Co.},
address = {USA},
booktitle = {Distributed Systems (2nd Ed.)},
pages = {199–216},
numpages = {18}
}

@InProceedings{lockfreeIterators,
author="Petrank, Erez
and Timnat, Shahar",
title="Lock-Free Data-Structure Iterators",
booktitle="DISC",
year="2013",
pages="224--238",
abstract="Concurrent data structures are often used with large concurrent software. An iterator that traverses the data structure items is a highly desirable interface that often exists for sequential data structures but is missing from (almost all) concurrent data-structure implementations. In this paper we introduce a technique for adding a linearizable wait-free iterator to a wait-free or a lock-free data structure that implements a set. We use this technique to implement an iterator for the wait-free and lock-free linked-lists and for the lock-free skip-list.",
isbn="978-3-642-41527-2"
}

@article{Tsigas,
  author       = {H{\aa}kan Sundell and
                  Philippas Tsigas},
  title        = {Lock-free deques and doubly linked lists},
  journal      = {J. Parallel Distributed Comput.},
  volume       = {68},
  number       = {7},
  pages        = {1008--1020},
  year         = {2008},
  url          = {https://doi.org/10.1016/j.jpdc.2008.03.001},
  doi          = {10.1016/J.JPDC.2008.03.001},
  timestamp    = {Sat, 22 Feb 2020 19:36:22 +0100},
  biburl       = {https://dblp.org/rec/journals/jpdc/SundellT08.bib},
  bibsource    = {dblp computer science bibliography, https://dblp.org}
}

@inproceedings{Timnat,
  author       = {Shahar Timnat and
                  Anastasia Braginsky and
                  Alex Kogan and
                  Erez Petrank},
  title        = {Wait-Free Linked-Lists},
  booktitle    = {OPODIS},
  pages        = {330--344},
  publisher    = {Springer},
  year         = {2012},
  url          = {https://doi.org/10.1007/978-3-642-35476-2\_23},
  doi          = {10.1007/978-3-642-35476-2\_23},
  timestamp    = {Tue, 29 Dec 2020 18:38:45 +0100},
  biburl       = {https://dblp.org/rec/conf/opodis/TimnatBKP12.bib},
  bibsource    = {dblp computer science bibliography, https://dblp.org}
}

@inproceedings{Ruppert,
  author       = {Mikhail Fomitchev and
                  Eric Ruppert},
  editor       = {Soma Chaudhuri and
                  Shay Kutten},
  title        = {Lock-free linked lists and skip lists},
  booktitle    = {PODC},
  pages        = {50--59},
  publisher    = {{ACM}},
  year         = {2004},
  url          = {https://doi.org/10.1145/1011767.1011776},
  doi          = {10.1145/1011767.1011776},
  timestamp    = {Wed, 14 Nov 2018 10:51:56 +0100},
  biburl       = {https://dblp.org/rec/conf/podc/FomitchevR04.bib},
  bibsource    = {dblp computer science bibliography, https://dblp.org}
}

@inproceedings{Attiya,
  author       = {Hagit Attiya and
                  Eshcar Hillel},
  editor       = {Shlomi Dolev},
  title        = {Built-In Coloring for Highly-Concurrent Doubly-Linked Lists},
  booktitle    = {DISC},
  pages        = {31--45},
  publisher    = {Springer},
  year         = {2006},
  url          = {https://doi.org/10.1007/11864219\_3},
  doi          = {10.1007/11864219\_3},
  timestamp    = {Sat, 19 Oct 2019 20:32:37 +0200},
  biburl       = {https://dblp.org/rec/conf/wdag/AttiyaH06.bib},
  bibsource    = {dblp computer science bibliography, https://dblp.org}
}

@inproceedings{Valois,
  author       = {John D. Valois},
  editor       = {James H. Anderson},
  title        = {Lock-Free Linked Lists Using Compare-and-Swap},
  booktitle    = {PODC},
  pages        = {214--222},
  publisher    = {{ACM}},
  year         = {1995},
}

@inproceedings{chunkedlist,
  author       = {Anastasia Braginsky and
                  Erez Petrank},
  editor       = {Marcos Kawazoe Aguilera and
                  Haifeng Yu and
                  Nitin H. Vaidya and
                  Vikram Srinivasan and
                  Romit Roy Choudhury},
  title        = {Locality-Conscious Lock-Free Linked Lists},
  booktitle    = {ICDCN},
  pages        = {107--118},
  publisher    = {Springer},
  year         = {2011},
}

@inproceedings{A-RK2006IPDPS,
  author    = {Amine Abou{-}Rjeili and
               George Karypis},
  title     = {Multilevel algorithms for partitioning power-law graphs},
  booktitle = {IPDPS},
  year      = {2006},
}

@inproceedings{raft,
author = {Ongaro, Diego and Ousterhout, John},
title = {In search of an understandable consensus algorithm},
year = {2014},
isbn = {9781931971102},
publisher = {USENIX Association},
address = {USA},
abstract = {Raft is a consensus algorithm for managing a replicated log. It produces a result equivalent to (multi-)Paxos, and it is as efficient as Paxos, but its structure is different from Paxos; this makes Raft more understandable than Paxos and also provides a better foundation for building practical systems. In order to enhance understandability, Raft separates the key elements of consensus, such as leader election, log replication, and safety, and it enforces a stronger degree of coherency to reduce the number of states that must be considered. Results from a user study demonstrate that Raft is easier for students to learn than Paxos. Raft also includes a new mechanism for changing the cluster membership, which uses overlapping majorities to guarantee safety.},
booktitle = {Proceedings of the 2014 USENIX Conference on USENIX Annual Technical Conference},
pages = {305–320},
numpages = {16},
location = {Philadelphia, PA},
series = {USENIX ATC'14}
}

@inproceedings {DHT,
author = {Steven D. Gribble and Eric A. Brewer and Joseph M. Hellerstein},
title = {Scalable, Distributed Data Structures for Internet Service Construction},
booktitle = {Fourth Symposium on Operating Systems Design and Implementation (OSDI 2000)},
year = {2000},
address = {San Diego, CA },
url = {https://www.usenix.org/conference/osdi-2000/scalable-distributed-data-structures-internet-service-construction},
publisher = {USENIX Association},
month = oct
}

@article{cassandra,
author = {Lakshman, Avinash and Malik, Prashant},
title = {Cassandra: a decentralized structured storage system},
year = {2010},
issue_date = {April 2010},
publisher = {Association for Computing Machinery},
address = {New York, NY, USA},
volume = {44},
number = {2},
issn = {0163-5980},
url = {https://doi.org/10.1145/1773912.1773922},
doi = {10.1145/1773912.1773922},
abstract = {Cassandra is a distributed storage system for managing very large amounts of structured data spread out across many commodity servers, while providing highly available service with no single point of failure. Cassandra aims to run on top of an infrastructure of hundreds of nodes (possibly spread across different data centers). At this scale, small and large components fail continuously. The way Cassandra manages the persistent state in the face of these failures drives the reliability and scalability of the software systems relying on this service. While in many ways Cassandra resembles a database and shares many design and implementation strategies therewith, Cassandra does not support a full relational data model; instead, it provides clients with a simple data model that supports dynamic control over data layout and format. Cassandra system was designed to run on cheap commodity hardware and handle high write throughput while not sacrificing read efficiency.},
journal = {SIGOPS Oper. Syst. Rev.},
month = {apr},
pages = {35–40},
numpages = {6}
}

@book{cormen2022introduction,
  title={Introduction to algorithms},
  author={Cormen, Thomas H and Leiserson, Charles E and Rivest, Ronald L and Stein, Clifford},
  year={2022},
  publisher={MIT press}
}
}

\begin{appendices}


\newpage

    \section{Additional Examples and Mathematical Formulation of of Context-awareness}\label{sec:additionalcontextaware}
    
    In this section, we give a mathematical definition of a data structure being context-aware. 
    This definition depends upon the requirement that the data structure is being accessed by multiple clients/threads. It is independent of whether the data structure can handle one operation at a time or whether multiple operations can execute concurrently. 

    A data structure can be viewed to be providing a set of transitions $F_s = f_1, f_2, \cdots$ (For example, a linked list provides operations $\InsertAfter(Ref(A), J)$, $\Next(Ref(B)$, etc.) to transition the state of the data structure from its current state $s$ to a new state (this new state can be the same state $s$).
    Note that the same client operation with different parameters (such as different keys), constitute to different elements in the set $F_s$. Further, assume only one operation can be executed by a client at a time. Given that the data structure is being accessed by multiple clients, consider client $c$ invoking an operation corresponding to transition $f_i$. We can view this invocation as a function $f_i$ to be of the form $f_i(s, h)$ where $s$ is the state of the data structure when the operation of $f_i$ is called, and $h$ is the (local) history of the transitions performed by the client $c$ on the data structure previously. Here, h provides \textit{context} of the client to the data structure, for the same operation and the same parameters. Function $f_i$ 
    returns $(s',v)$, where $s'$ is the updated state of the data structure and $v$ is the value returned to the client. 
    
    In other words, the result of the invocation of $f_i$ may depend upon the current state of the data structure and the history of the client invoking it. Further, given a history h of client c, a client maybe \textit{enabled} to perform only a subset of transitions $\prescript{h}{}{F_{s}} \subseteq F_s$ (for example, $\InsertAfter$ can only call with a reference that was returned in a previous operation). 
    
    We say that a data structure is context-aware iff there is at least one operation where the history (or context) of the client affects the result of the operation. In other words, a data structure is context-aware iff 
    \begin{align*}
        \exists \ (s,h_1,h_2) : \ &(\prescript{h_1}{}{F_s} \neq \prescript{h_2}{}{F_s}) \ \vee\\
        \ & (\exists f_i \in \prescript{h_1}{}{F_s} \  f_i(s,h_1) \neq f_i(s, h_2))
    \end{align*}

    As discussed in \Cref{sec:lockfreeconclist}, the most common way to identify if a data structure is context-aware is if it supports an operation that can only be invoked with a value returned from a previous data operation. In this case, the dependency on the history is visible syntactically, as a transition $f_i$ is enabled in one history but not in another, i.e., given a data structure state s, there exists two histories $h_1$, $h_2$ such that $f_i \in \prescript{h_1}{}{F_s} \wedge f_i \notin \prescript{h_2}{}{F_s}$. Thus, such data structures are context-aware. However, this dependency could be semantic only, i.e., the same transition $f_i$ is enabled in two histories ($f_i \in \prescript{h_1}{}{F_s} \wedge f_i \in \prescript{h_2}{}{F_s}$) but their result is different ($f_i(s,h_1) \neq f_i(s, h_2)$). A data structure whose operation transitions, in the presence of a client context, differ neither by syntax,
    nor by semantics,
    is not context-aware.

    As additional examples, a sorted linked list with insert(key), delete(key) and search(key) is not context-aware.
However, if it additionally supported $\Next()$ operation,
it would be context-aware.

Similarly, a tree that provides LeftChild and RightChild read operations for a given node is context-aware. A doubly-linked list that provides InsertBefore a node and Prev of a node as operations is also context-aware. However, a binary search tree that only provides set operations 
is not context-aware.

\section{Concurrent Shared Memory Linked List Appendix}\label{sec:concurrentListAppendix}
    This section provides information for the de-linking background process of \Cref{alg:BaseAlgUpdate} and additional explanation for its properties.
    
    \subsection{De-linking of nodes}\label{sec:delinkingAppendix}
    In addition to the logical delete in the $\Delete$ operation in \Cref{alg:BaseAlgUpdate}, we need to periodically de-link the logically deleted nodes out of the linked list, so that garbage collection techniques such as \cite{hazardPointers, debra} can be used to reclaim those memory locations. We do this as a separate process to make the transformation techniques easier to implement. 

    As illustrated in \Cref{alg:Delinking}, the periodic delinking process  traverses through each node in each \sublist while looking for a deleted node. If a node is deleted, the next pointer of its previous node is updated to point to the node after the deleted node using a CAS operation. If the CAS fails, it means the prev node had a concurrent $\InsertAfter$ operation taking place. Hence the next of a different node points to this deleted node. Hence we roll $curr$ back to this inserted node, and reach the deleted node later through the traversal. Upon a succesful CAS, operations that are currently on the deleted node, are now stuck with an outdated pointer to reach from curr.next. This is fixed by making it point to prev.next.

    In one execution of the delinking routine, any deleted node that is traversed is guaranteed to be delinked. A node that was inserted and deleted concurrently may not be traversed by a current delinking routine, but will be traversed and delinked as part of a future delinking routine. Hence every deleted node is eventually delinked. Periodicity of this routine controls the traversal overhead contributed by logically deleted nodes present in the linked list. 

    \subsection{Linearizability in the absence of $\Lookup$}\label{subsec:linearizabilityAppendix}
        The linearization point for $\InsertAfter$ (respectively, $\Delete$) is Line \ref{alg:BaseAlgoInsertCAS} (respectively, Line \ref{alg:BaseAlgoDeleteCAS}) of \Cref{alg:BaseAlgUpdate} when the \textit{RDCSS} operation (respectively, \textit{CAS}) completes successfully {and, hence, becomes the linearization point of \textit{RDCSS}\cite{MultiCAS}}. The linearization point of $\Next$ (in a sublist) is either just before any concurrent $\InsertAfter$ operations that took effect during $\Next$, or after Line \ref{alg:BaseAlgoNextIsDeleted}, when the operation finds an element that is not yet deleted, whichever of the two events happens earlier.

\begin{algorithm}[h]
\scriptsize
\DontPrintSemicolon
\caption{Delinking logically deleted nodes}\label{alg:Delinking}
\Comment{periodically called, but can also be called forcibly}
\Fn{DelinkDeletedNodes()}{
    \For{SL : $\Registry$} {
        $prev \gets SL.getSH()$\;
        $curr \gets prev.next$\;
        \Repeat{curr = $\subtail$} {
            \eIf{curr.isDeleted = True}{
                \If{CAS(\&prev.next, curr, curr.next)}{
                    $curr.next \gets prev.next$\;
                    \Comment{Enable garbage collection for curr}
                }
                $curr \gets prev.next$\;
            } {
                $prev \gets curr$\;
                $curr \gets curr.next$\;
            }   
        }
    }
}
\end{algorithm}

\section{$\AbortingMoveMethod$ Protocol Appendix}\label{AMAppendix}
    This section explains the correctness of $\AbortingMoveMethod$ Protocol through the following Lemmas: 
    \begin{lemma}\label{AMPropertiesLemma}
           Retaining properties of client-visible operations:
           \begin{enumerate}
               \item {In the absence of a $\Lookup$, $\InsertAfter$, $\Next$ and $\Delete$ are linearizable (regardless of whether a \sublist is undergoing any of the client-invisible operations that support the transformation).}
               \item $\Lookup$ holds the following property at all times: If an instance of key k is present throughout the $\Lookup$ operation, then that instance of k will be included in the response of $\Lookup(k)$. A $\Lookup$ may return ‘Not found’ if no item with the given key has persisted for the entire duration of the $\Lookup$.
           \end{enumerate}
    \end{lemma}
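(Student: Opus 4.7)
I would prove the two claims separately by identifying linearization points for the first claim and by a persistence/coverage argument for the second. For the first claim, the plan is to define linearization points for $\InsertAfter$, $\Delete$, and $\Next$ that reduce to those of the base shared-memory algorithm of \Cref{alg:BaseAlgUpdate} whenever no client-invisible operation is active, and to handle concurrent $\Split$, $\Move$, and $\SwitchLB$ by a case split on which side of the CAS at Line \ref{alg:AMAlgoLoadBalanceMoveCAS} (setting $\StartCount$ to $-\infty$) the operation lies. For the second claim, I would argue that any key $k$ that persists through the entire $\Lookup$ is seen by at least one of the sublist traversals performed either locally or via delegation, using the two $\StartCount$ sign checks surrounding the traversal loop.

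\textbf{Linearizability of updates and Next.} For $\InsertAfter$ and $\Delete$, I take the successful RDCSS (Line \ref{alg:AMAlgoUpdateInsertCAS}) or CAS (Line \ref{alg:AMAlgoUpdateDeleteCAS}) as the linearization point when the operation commits locally, and the linearization point of the corresponding delegated invocation on $S_2$ otherwise. The key invariant is that once the $\Move$'s CAS at Line \ref{alg:AMAlgoLoadBalanceMoveCAS} succeeds, every subsequent update observes a negative $\StartCount$ at Line \ref{alg:AMAlgoUpdateInsertCheck}/\ref{alg:AMAlgoUpdateDeleteCheck} and delegates; conversely, before this CAS succeeds, any local update that committed must have passed the non-negative check, so it was counted in $\StartCount$ and $\EndCount$. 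Because the $\Move$ only switches after detecting a snapshot satisfying $\StartCount - \counterTemp = \offset$, no committed local update is missed on $S_2$ before the switch, so the two linearization orders concatenate into a valid sequential history. $\Split$ does not disturb this argument because it preserves the invariant on $(\StartCount-\EndCount) + (\newStartCount-\newEndCount)$ and never retargets the RDCSS/CAS of an in-flight update. For $\Next$, I would take as linearization point either the first iteration at Line \ref{alg:AMAlgoReadNextElement} that returns a non-stale non-deleted node, or the point just before a concurrent $\InsertAfter$ on $prev$ commits when that insertion is returned, reusing the base-algorithm argument.

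\textbf{The Lookup persistence property and the main obstacle.} For each sublist $SL$, $\tt{LookupSublist}$ reads $\StartCount$ both before (Line \ref{alg:AMAlgoReadLookupDelegateCheck1}) and after (Line \ref{alg:AMAlgoReadLookupDelegateCheck2}) walking from $\subhead$ to $\subtail$. If both reads see a non-negative $\StartCount$, then $SL$ was owned locally throughout, so any node $n$ with $n.key=k$ present throughout is examined and admitted by the check at Line \ref{alg:AMAlgoReadLookupValidityCheck}. If either check sees a negative $\StartCount$, then the $\Move$'s CAS succeeded during the traversal and $\tt{LookupSublist}$ delegates to the new owner, which recursively provides the same guarantee on the moved copy. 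The main obstacle will be to extend this argument across a sequence of $\Move$s: a key may in principle sit in a sublist that is split and re-moved several times within the $\Lookup$ lifetime. I would handle this by induction on the number of $\SwitchLB$ events occurring during the $\Lookup$, using the fact that $\SwitchLB$ waits for $\theta$ time (Line \ref{alg:AMAlgoLoadBalance:switchWait}) before removing the stale sublist from the $\Registry$, which is precisely what ensures that any outstanding delegation reaches the current owner before the source is reclaimed, so the inductive hypothesis carries through each handoff.
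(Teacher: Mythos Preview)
Your approach matches the paper's: both proofs proceed by case analysis on whether a client-invisible operation is active (equivalently, on the sign of $\StartCount$), reducing each case to the linearization points of \Cref{alg:BaseAlgUpdate}, and both handle $\Lookup$ by arguing that the pre/post $\StartCount$ checks force either a local traversal or a delegation that covers any persisting node.

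One slip to fix: your $\Next$ linearization has the two cases inverted. When the newly inserted node $N$ is returned, the linearization point must lie \emph{after} $\InsertAfter(prev,N)$ commits (at the successful $isDeleted$ check on $N$); it is when the \emph{old} next node is returned despite a concurrent $\InsertAfter$ that you must linearize just before that insert took effect. Your induction over $\SwitchLB$ events for the $\Lookup$ claim is actually more explicit than the paper, which argues a single handoff and leaves chained moves implicit; similarly, your remark that the two per-server linearization orders ``concatenate'' because the $\Move$ CAS only succeeds when $\StartCount-\counterTemp=\offset$ is a welcome piece of rigor that the paper defers to its separate $\Move$-correctness lemma.
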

    \begin{proof}
        \begin{enumerate}[nosep, leftmargin=*]
            \item We prove the first claim by identifying linearization points for the client-visible operations under all circumstances as follows:
            \begin{itemize}[nosep, leftmargin=*]
                \item  \textbf{During a period when there is no on-going $\Split$, $\Move$ or $\SwitchLB$ operations on a \sublist:}
                        When there is no on-going $\Split$, $\Move$, or $\SwitchLB$, the nodes will always retain a non-negative \StartCount  as discussed in \Cref{subsec:AMAlgorithm}. Due to this, client-visible operations will fail \StartCount negative checks, and so, $\Delete$ and $\InsertAfter$ operations perform the operation identical to \Cref{alg:BaseAlgUpdate}. Hence they share the same linearization points as discussed in \Cref{subsec:LMLinearizability} (the CAS at Line \ref{alg:AMAlgoUpdateDeleteCAS} and RDCSS in Line \ref{alg:AMAlgoUpdateInsertCAS}, respectively, from \Cref{alg:AMAlgoUpdate}). 
                        For the $\Next$ operation, if the immediate $\Next$ node is a $\subtail$, then the request is delegated to the server handling requests for the next \sublist, as per the traversal. Thus, when there are no concurrent update operations on the node in the $\Next$ request, the linearization point for $\Next$ is when the execution of Line \ref{alg:AMAlgoReadNextElement} in some server results on a non-deleted node, that is not a $\subhead$ or a $\dummyNode$ (these can occur because of a concurrent $\Split$). While the $\Next$ operation can have delegations due to the next \sublist being in a different server, observe that this point of $\Next$ execution is a single atomic operation of checking $isDeleted$ value of the node about to be returned as the $\Next$ node. If there is a concurrent $\Delete$ on the node being returned, the same point of execution ($isDeleted$ check) will ensure that a deleted node is not returned. During concurrent $\InsertAfter$ on the node in the request of $\Next$, the linearization point has two cases : (1) The newly inserted node is returned - In this case linearization point is the $isDeleted$ check on the new node (2) An element that was formerly the immediate next element is returned - In this case, linearization point of the operation is just before the concurrent $\InsertAfter$ took effect. Thus, the linearization point of $\Next$ is the earlier of the two mentioned points of execution.   
                \item \textbf{During $\Split$:}
                When a \sublist is undergoing a $\Split$ operation, we have an assumption that nodes contained within it are not undergoing a $\Move$ operation. Thus, the nodes inside it would always retain a non-negative \StartCount throughout the course of the $\Split$ operation, as discussed in \Cref{subsec:AMAlgorithm}. Due to this, any node involved in the $\Split$ will fail \StartCount negative checks and so the $\Delete$, $\InsertAfter$ and $\Next$ operations maintain the same linearization points as if there is no ongoing $\Split$.
                \item \textbf{During $\Move$:}
                Until Line \ref{alg:AMAlgoLoadBalanceMoveCAS} of \Cref{alg:AMAlgoLoadBalanceMove} succeeds, \StartCount and \EndCount of the \sublist undergoing a $\Move$ remain non-negative. Hence, the linearization points upto this point is the same as client-visible operations on nodes that are not undergoing a $\Move$. Linearization point post Line \ref{alg:AMAlgoLoadBalanceMoveCAS} is discussed as part of $\SwitchLB$.
                \item \textbf{During $\SwitchLB$:}
                When ownership of a \sublist is being switched from server $S_1$ to $S_2$, the \StartCount and \EndCount of these nodes are negative in $S_1$. Hence the $\Next$, $\Delete$ and $\InsertAfter$ operations are delegated to $S_2$. Note that the nodes of the \sublist are stale in $S_1$ and so operations need to happen only in $S_2$. Hence the operations end up happening in $S_2$ with new \StartCount and \EndCount variables that are introduced as part of $\Move$ (Line \ref{AMMoveReceiveSHStart} to Line \ref{AMMoveReceiveSHEnd} of \Cref{alg:AMAlgoLoadBalanceMove}), which are already non-negative. $\Next$, $\Delete$ and $\InsertAfter$ in $S_2$ will therefore happen as if, there is no ongoing $\Split$, $\Move$, or $\Delete$, and thus retain the same linearization points.
            \end{itemize}
            \item $\Lookup$ Property : When the \StartCount is not negative, $\Lookup$ performs similar to \Cref{alg:BaseAlgUpdate}. When it is negative, we know that the \sublist in the current server has become stale. Thus, a request is sent to the server that currently owns the \sublist(Line \ref{alg:AMAlgoReadLookupDelegate2} of \Cref{alg:AMAlgoRead}), in order to fetch the nodes of a matching key. Thus, given a node that persisted throughout the duration of $\Lookup$ operation, a server will check if the node has a matching key either on its own, or through a delegation request that lets another server check on its behalf. Thus, for each $\Lookup$ operation, each node present at the invocation of $\Lookup$ in the list is traversed by some server, provided the node is not deleted concurrently during the $\Lookup$. This ensures that, if a node of a matching key persists during the entire duration of $\Lookup$, some server will return this node. Hence, if all servers return empty sets for the $\Lookup$ request, the client infers the collective response as a `Not found' message. Thus, the $\Lookup$ property described in \Cref{subsec:LMLinearizability} is preserved.
        \end{enumerate}
    \end{proof}

    \begin{definition}[Termination Condition]
        Termination condition of an operation is the circumstance that an operation requires in order to reach a point of execution, after which the number of remaining instructions is bounded.
    \end{definition}
    For example, for $\InsertAfter$ operations, the termination condition is a successful CAS operation. Until the CAS operation succeeds, a variable number of CAS operation failures could take place, and cause the lock-free $\InsertAfter$ operation to not terminate. Once the CAS operation completes, the number of remaining steps for the $\InsertAfter$ is bounded.   
    \begin{lemma}\label{AMTransOperationsLemma}
        The $\Split$, $\Move$ and $\SwitchLB$  operations satisfy their requirements described in \Cref{sec:Requirements} when their termination condition is met.
    \end{lemma}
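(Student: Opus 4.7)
The plan is to handle each of the three client-invisible operations separately, showing that once the termination condition fires, the postcondition stated in \Cref{sec:Requirements} holds. For each operation, I would identify the single atomic step that commits the operation (an \textit{RDCSS} or \textit{CAS}), argue that it establishes the required structural invariant on the list, and then argue that the constant number of bookkeeping steps that follow it (updating the \Registry, relabeling a dummy node, etc.) preserve that invariant.

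For \textbf{$\Split$}, the termination condition is the successful \textit{RDCSS} on Line \ref{alg:AMAlgoLoadBalanceSplitCAS} together with the eventual equality check on Line \ref{alg:AMAlgoLoadBalanceSplitOffsetCompute}. First I would argue that after the \textit{RDCSS}, the sequence $\langle \ldots, \tt{node}, \dummyNode, \subhead, \ldots \rangle$ is present in memory but is still traversed by clients as a single \sublist, because the inserted dummy has not yet been relabelled as a $\subtail$. Next, I would argue that the counter-pointer updates between Lines \ref{alg:AMAlgoLoadBalanceSplitStartCount}--\ref{alg:AMAlgoLoadBalanceSplitEndCount} establish the invariant $(\StartCount-\EndCount)+(\newStartCount-\newEndCount) = \offset_{SL}$ (plus contributions of pending updates). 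The fixed-point check on Line \ref{alg:AMAlgoLoadBalanceSplitOffsetCompute} then gives a consistent snapshot in which the two differences separately equal the new offsets, and the final two steps (adding to \Registry and relabelling the dummy as $\subtail$) atomically make the structural split visible. The main subtle step here is justifying that concurrent updates cannot break the additive offset invariant, and this uses the fact that each update increments \StartCount and \EndCount of the same sublist.

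For \textbf{$\Move$}, the termination condition is the successful \textit{CAS} on Line \ref{alg:AMAlgoLoadBalanceMoveCAS} setting \StartCount to $-\infty$. I would show: (i) by induction on the loop that traverses the sublist, at the moment of the \textit{CAS}, every node of the source \sublist has been placed (in order) in $S_2$ by MoveItem/MoveReceive; (ii) the success of the \textit{CAS} testing that $\StartCount - \counterTemp = \offset$ witnesses that no update completed on the source between the snapshot of \EndCount at Line \ref{alg:AMAlgoLoadBalanceMoveCounterTemp} and the \textit{CAS}; and (iii) from the moment $\StartCount$ becomes negative, every subsequent client operation falls into the delegation branches (Lines \ref{alg:AMAlgoUpdateInsertCheck}, \ref{alg:AMAlgoUpdateDeleteCheck}, \ref{alg:AMAlgoReadLookupDelegateCheck1}, \ref{alg:AMAlgoReadLookupDelegateCheck2}, \ref{alg:AMAlgoReadNextDelegateCheck}) and is forwarded to $S_2$ via the \newLocation pointers written by MoveItem. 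Together these give the two conjuncts of the $\Move$ requirement: the $S_2$ copy equals the source at commit time, and all future updates go to $S_2$.

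For \textbf{$\SwitchLB$}, no \textit{CAS} gates termination; the operation runs to completion along a straight line. I would check each action in turn: the write at Line \ref{alg:AMAlgoLoadBalance:switchPrecedingSubTail} atomically re-points the preceding $\subtail$ to the remote $\subhead$, making the moved \sublist reachable by traversal; the $\SwitchLB$ message to $S_2$ (Line \ref{alg:AMAlgoLoadBalance:switchRequest}) installs the entry in $S_2$'s \Registry (Line \ref{alg:AMAlgoLoadBalanceReceiveSwitchLine}), so $\Lookup$ broadcasts now reach the active copy; and the wait for $\theta$ time (Line \ref{alg:AMAlgoLoadBalance:switchWait}) uses the \ItemRef-expiry assumption from \Cref{para:Assumptions} to guarantee that no client can still be holding a reference into the stale copy. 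After that, removing the entry from \Registry and marking the stale nodes for deletion is safe. The main obstacle across the three proofs is the $\Move$ argument: specifically, showing that no update on the source is ``missed'' by the termination check, since an update could in principle increment \StartCount after the sublist traversal and before the \textit{CAS}. The key observation is that any such interleaving causes the \textit{CAS}'s comparison to fail (because $\StartCount - \counterTemp > \offset$), so on success one can rule out any completed or in-flight update on the source, which is exactly what preserves the equality between source and remote copies at the commit moment. Lock-freedom of client-visible operations is immediate from the fact that none of the three client-invisible operations ever acquires a lock and that all delegation branches execute a bounded number of steps before forwarding.
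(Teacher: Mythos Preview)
Your proposal is correct and follows essentially the same decomposition as the paper: treat each of $\Split$, $\Move$, $\SwitchLB$ separately, identify the committing atomic step, and then verify line-by-line that the requirements of \Cref{sec:Requirements} are discharged (insertion of the $\subtail$/$\subhead$ block and counter separation for $\Split$; identical copies at commit plus delegation thereafter for $\Move$; reachability, $\Registry$ handover, and the $\theta$-wait for $\SwitchLB$). Two small slips to clean up: the commit on Line~\ref{alg:AMAlgoLoadBalanceSplitCAS} of $\Split$ is a plain \textit{CAS}, not an \textit{RDCSS}; and the $\Move$ \textit{CAS} succeeds when $\StartCount = \counterTemp$, not when $\StartCount - \counterTemp = \offset$, since $\counterTemp$ already absorbs the offset on Line~\ref{alg:AMAlgoLoadBalanceMoveCounterTemp} (the paper's own prose in \Cref{subsec:AMAlgorithm} contains the same imprecision, but the code is unambiguous).
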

    \begin{proof}
        \begin{itemize}[nosep, leftmargin=*]
    \item {\textbf{$\Split$: }}
        The $\Split$ operation in \Cref{alg:AMAlgoLoadBalanceSplit} can perform repeated steps because of concurrent interactions with client-visible operations through Lines \ref{alg:AMAlgoLoadBalanceSplitCAS} and \ref{alg:AMAlgoLoadBalanceSplitOffsetCompute} of \Cref{alg:AMAlgoLoadBalanceSplit}. Line \ref{alg:AMAlgoLoadBalanceSplitCAS} succeeds when $\Split$ wins the fair race that it partakes by executing \textit{RDCSS} instructions concurrently with $\InsertAfter$ operations on the same $node$. Line \ref{alg:AMAlgoLoadBalanceSplitOffsetCompute} succeeds when there are no concurrent update operations on the \sublist $SL$ undergoing the $\Split$, as explained in \Cref{subsec:AMAlgorithm}. When these termination conditions are met, the goal of the $\Split$ operation is to successfully achieve:
        \paragraph{Insertion of a $\subtail$ followed immediately by a $\subhead$ at the point of $\Split$}
        This is achieved by Line \ref{alg:AMAlgoLoadBalanceSplitCAS} inserting the 2 nodes into the \sublist. The second node is already a $\subhead$. The first node is inserted as a $\dummyNode$ but later marked as a $\subtail$ by Line \ref{alg:AMAlgoLoadBalanceSplitTruncate}.
        \paragraph{The resulting two \sublists have separate \StartCount , \EndCount and \offset parameters } 
        Separate \StartCount and \EndCount is achieved by creating \newStartCount and \newEndCount variables and assigning them for the right-half of the \sublist (nodes of $SL$ after the point of $\Split$) by Line \ref{alg:AMAlgoLoadBalanceSplitCounterUpdateFinished}. The \offsets are set after computing them through Line \ref{alg:AMAlgoLoadBalanceSplitOffsetCompute} loop condition, as explained in \Cref{subsec:AMAlgorithm}.
        \paragraph{The resulting two \sublists are individually present in the $\Registry$}
        This is achieved by Line \ref{alg:AMAlgoLoadBalanceSplitAddReg}, where the right-half \sublist is added to the $\Registry$. The left-half is already present in the $\Registry$, as it shares the same $\subhead$ as the \sublist that is being split ($SL$).
    \item {\textbf{$\Move$: }}
        The $\Move$ operation in \Cref{alg:AMAlgoLoadBalanceMove} can perform repeated steps because of concurrent interactions with client-visible operations, through the \textit{CAS} in Line \ref{alg:AMAlgoLoadBalanceMoveCAS}. This \textit{CAS} instruction on \StartCount can fail, if there was a concurrent $\InsertAfter$ or $\Delete$ operation taking place on the \sublist. This forces the $\Move$ to abort (Line \ref{alg:AMAlgoLoadBalanceMoveSLDelete}) and restart (Line \ref{alg:AMAlgoLoadBalanceMoveRetry}) during update operations on the \sublist. Hence the termination condition is to not have update operations until Line \ref{alg:AMAlgoLoadBalanceMoveCAS} takes effect. When this condition is met, $\Move$ iterates through all nodes of the \sublist and clones the \sublist on a new server, thereby reaching the stage where both servers have identical versions of the \sublist.
    \item {\textbf{$\SwitchLB$: }}
        The $\SwitchLB$ operation in \Cref{alg:AMAlgoLoadBalanceMove} does not require any special termination condition. It ensures the following:
        \setcounter{paragraph}{0}
        \paragraph{At least one server has ownership of the \sublist at all times}
            This is achieved by Line \ref{alg:AMAlgoLoadBalanceReceiveSwitchLine}, which in turn is achieved through Line \ref{alg:AMAlgoLoadBalance:switchRequest} where the request to the $\Registry$ addition is made. Once it is ensured that two servers own the \sublist, it is safely disowned by the moving server by Line \ref{alg:AMAlgoLoadBalance:switchRemoveReg}.
        \paragraph{The \sublist immediately before in traversal needs to point to the active version of this \sublist}
            This is achieved by Line \ref{alg:AMAlgoLoadBalance:switchPrecedingSubTail}, where the next pointer of the $\subtail$ of the previous \sublist is changed to point to the active \sublist.
        \paragraph{The stale \sublist is discarded only when clients can no longer hold reference to any of its nodes}
            This is achieved by the wait time of $\theta$ in Line \ref{alg:AMAlgoLoadBalance:switchWait}. As described in assumptions of \Cref{para:Assumptions}, the $\theta$ period is the maximum time upto which a client can receive and hold a reference to any node sent from the servers. The $\SwitchLB$ operation already marks the beginning of delegations of client-visible operations. Thus, clients holding references to stale nodes must have received them before $\SwitchLB$ operation had been initiated. Once $\SwitchLB$ operation has begun, it thus forces the maximum time that the stale references can be held by a client to be less than $\theta$, justifying $\theta$ to be a sufficient wait period in Line \ref{alg:AMAlgoLoadBalance:switchWait}, right before the sublist is marked for deletion.
        \end{itemize}

    \end{proof}

    \begin{theorem}
        The \AbortingMoveMethod protocol transforms the concurrent linked list in \Cref{alg:BaseAlgUpdate} into a distributed linked list while retaining its client-visible guarantees and supporting $\Split$, $\Move$ and $\SwitchLB$ transformation operations for the linked list.
    \end{theorem}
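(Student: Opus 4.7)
The plan is to assemble the theorem directly from the two preceding lemmas, \Cref{AMPropertiesLemma} and \Cref{AMTransOperationsLemma}, by showing that together they cover both components of the claim: (i) client-visible guarantees are preserved by the distributed implementation, and (ii) the transformation operations $\Split$, $\Move$, and $\SwitchLB$ meet the requirements laid out in \Cref{sec:Requirements}. The structural point to emphasize is that the list's behavior is defined by how a client observes sequences of $\InsertAfter$, $\Delete$, $\Lookup$, and $\Next$, whereas $\Split$, $\Move$, and $\SwitchLB$ are client-invisible; thus the theorem reduces to verifying that (a) under \emph{any} interleaving with the client-invisible operations, the linearizability of $\InsertAfter$, $\Delete$, $\Next$ and the persistence property of $\Lookup$ continue to hold, and (b) each client-invisible operation produces its intended structural effect on the list.

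First I would invoke \Cref{AMPropertiesLemma} to conclude part (a). The lemma exhausts the four disjoint phases a \sublist can be in --- no transformation operation in progress, $\Split$ in progress, $\Move$ in progress, and $\SwitchLB$ in progress --- and identifies a linearization point in each phase for every client-visible operation. In particular, the invariant that $\StartCount$ remains non-negative until Line \ref{alg:AMAlgoLoadBalanceMoveCAS} of \Cref{alg:AMAlgoLoadBalanceMove} succeeds is what allows $\InsertAfter$, $\Delete$ and $\Next$ to behave exactly as they do in \Cref{alg:BaseAlgUpdate} during $\Split$ and during the bulk of $\Move$; once $\StartCount$ is set to $-\infty$, delegation to the new server places the linearization point there, where the cloned \sublist is structurally identical to the stale one. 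For $\Lookup$, the lemma establishes that any key instance persisting throughout the operation is traversed by some server --- either directly or via a delegation triggered by a negative $\StartCount$ --- yielding the required persistence property.

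Next I would invoke \Cref{AMTransOperationsLemma} to conclude part (b). This lemma shows that, whenever its termination condition is met, $\Split$ installs the new $\subtail$/$\subhead$ pair, reassigns counters, computes correct offsets, and registers the new \sublist; $\Move$ produces an identical copy on the target server and flips $\StartCount$ to $-\infty$ atomically, ruling out lost concurrent updates via the offset invariant; and $\SwitchLB$ rewires the preceding $\subtail$, hands ownership to the new server, and reclaims the stale \sublist only after the reference-expiration window $\theta$. I would also note here, as a side remark, the standing Assumption from \Cref{para:Assumptions} that at most one of $\Split$, $\Move$, $\SwitchLB$ is active on a given \sublist, which is what makes the case analysis in the two lemmas mutually exclusive and hence exhaustive.

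The main obstacle --- and the step I would dwell on in the write-up --- is justifying that the linearization-point assignment across the phase boundaries is consistent, i.e., that a client-visible operation which spans the transition point at Line \ref{alg:AMAlgoLoadBalanceMoveCAS} or at $\SwitchLB$ initiation is linearized exactly once and in a legal position relative to every other operation. The key insight is that the CAS of Line \ref{alg:AMAlgoLoadBalanceMoveCAS} is both the linearization point of $\Move$'s completion and the synchronization point that forces subsequent updates to observe a negative $\StartCount$ and delegate, so no in-flight update straddles the transition: it either completes on $S_1$ before the CAS (contributing to $\StartCount - \counterTemp \neq \offset$ and aborting the $\Move$) or is redirected to $S_2$ afterward. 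Once this single-point-of-transition argument is stated, the theorem follows by conjoining the two lemmas.
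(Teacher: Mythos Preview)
Your proposal is correct and takes essentially the same approach as the paper: the paper's proof is the single sentence ``This follows directly from Lemma~\ref{AMPropertiesLemma} and Lemma~\ref{AMTransOperationsLemma},'' and you are doing exactly that, only with more exposition. Your additional paragraph on operations straddling the CAS at Line~\ref{alg:AMAlgoLoadBalanceMoveCAS} is a reasonable clarification, but the paper treats that as already absorbed into the case analysis of Lemma~\ref{AMPropertiesLemma} rather than as a separate obligation of the theorem.
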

    \begin{proof}
        This follows directly from Lemma \ref{AMPropertiesLemma} and Lemma \ref{AMTransOperationsLemma}.
    \end{proof}
\section{$\TempReplMethod$ Protocol Appendix}\label{TRAppendix}
    This section explain the correctness of $\TempReplMethod$ Protocol through the following Lemmas:
    \begin{lemma}\label{TRPropertiesLemma}
        Retaining properties of client-visible operations:
        \begin{enumerate}
               \item {In the absence of a $\Lookup$, $\InsertAfter$, $\Next$ (in a sublist) and $\Delete$ are linearizable regardless of whether a \sublist is undergoing any of the client-invisible operations that support the transformation.}
               \item $\Lookup$ holds the following property at all times: If an instance of key k is present throughout the $\Lookup$ operation, then that instance of k will be included in the response of $\Lookup(k)$. A $\Lookup$ may return ‘Not found’ if no item with the given key has persisted for the entire duration of the $\Lookup$.
           \end{enumerate}
    \end{lemma}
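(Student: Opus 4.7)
The plan is to mirror the structure of the proof of \Cref{AMPropertiesLemma}, identifying linearization points for each client-visible operation across four cases: (i) no ongoing client-invisible operations, (ii) during $\Split$, (iii) during $\Move$, and (iv) during $\SwitchLB$. Because \TempReplMethod reuses the $\Split$ and $\SwitchLB$ pseudocode of \AbortingMoveMethod, cases (i), (ii), and (iv) go through essentially word-for-word: in (i) and (ii) the nodes of the \sublist retain non-negative \StartCount, so $\InsertAfter$ and $\Delete$ linearize at their \textit{RDCSS} and \textit{CAS} on $isDeleted$ respectively, and $\Next$ linearizes at the $isDeleted$ check of the candidate returned node or just before a concurrent $\InsertAfter$, whichever is earlier; in (iv), \StartCount has been set to $-\infty$ on $S_1$, so operations are delegated to $S_2$ and linearize there with the fresh counters inherited from $\tt{MoveReceive}$ of the $\subhead$.

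The new content lies in case (iii). I will first observe that up until the \textit{CAS} at Line \ref{alg:TRAlgoLoadBalanceMoveCAS} succeeds, \StartCount on $S_1$ remains non-negative, so every client request is served locally on $S_1$ and linearizes exactly as in case (i); the $\Replicate$ messages fired at Lines \ref{alg:ReplicateMsgInsert} and \ref{alg:ReplicateMsgDelete} are asynchronous and mutate only the partial copy on $S_2$, and therefore cannot influence the linearization of the originating $S_1$-local operation. To close the case I then need to show that once the \textit{CAS} succeeds, the state on $S_2$ coincides with the state of $S_1$ at the instant of that \textit{CAS}, so that subsequent delegated requests see the correct list. This reduces to an \emph{order-correctness} invariant for replay: among all nodes inserted with a given predecessor, the on-$S_2$ order must equal the on-$S_1$ order. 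I plan to establish this via an invariant on the $\TS$ field: (a) on $S_1$, at the moment of a successful $\InsertAfter(\mathrm{prev}, N)$, the right neighbor of $N$ has $\TS < N.\TS$, because that neighbor was the value of $\mathrm{prev}.\mathrm{next}$ at the \textit{RDCSS} and was therefore either a pre-existing node inserted earlier or a dummy; and (b) the search loop at Lines \ref{TRInsertReplaycurrPrevFinding}--\ref{TRInsertReplaycurrFinding} places each replayed node at the unique position whose right neighbor has smaller $\TS$. Combined with the $(sId,ts)$-keyed predecessor lookup at Lines \ref{alg:TRFindPrevStart}--\ref{alg:TRFindPrevEnd}, this yields topology equality on $S_2$ regardless of the order in which $\Replicate$ messages are delivered. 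Part~(2) of the lemma then follows exactly as in the \AbortingMoveMethod proof: any key that persists throughout a $\Lookup$ is visited either locally at its owning \sublist or, after delegation via $\newLocation$, at the new owner, and the topology invariant guarantees the delegated traversal sees the same persistent nodes.

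The hard part will be the order-correctness invariant itself and its interaction with the concurrent $\tt{MoveItem}$ traversal on $S_2$. In particular, an $\ReplicateInsertAfter$ for a child $N$ of $A$ may arrive at $S_2$ before $\tt{MoveItem}$ has copied $A$, or before an earlier sibling of $N$ has been replayed. I plan to handle this by: (a) showing that $\tt{MoveItem}$ proceeds in link order and writes $\newLocation$ on each source node before returning, so the predecessor-lookup at Lines \ref{alg:TRFindPrevStart}--\ref{alg:TRFindPrevEnd} terminates as soon as the target predecessor is on $S_2$ (on $S_1$ a Replicate for $N$ is only emitted after $\mathrm{prev}.\newLocation$ names a remote node, which means the predecessor has either been moved or previously replayed); (b) showing that the timestamp-ordered splicing inside $\tt{MoveReceive}$ is structurally identical to the one in $\tt{ReplicateInsertAfterReceive}$, so $\Move$-driven and $\Replicate$-driven insertions on $S_2$ cannot interleave into an out-of-order sibling sequence; and (c) arguing that the wait loop at Line \ref{alg:TRMoveWaitforReplays} fires only after every in-flight $\Replicate$ has been acknowledged, using the $\StartCount - \EndCount = \offset$ equality checked by the \textit{CAS}, so that the delegation switch commits a consistent snapshot.
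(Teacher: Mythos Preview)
Your proposal is correct but substantially more elaborate than the paper's own proof. The paper dispatches this lemma in three sentences: it observes that in \TempReplMethod the $\Delete$ and $\InsertAfter$ operations return to the client \emph{before} any $\Replicate$ message is sent (so the client-visible code path up to the response is identical to \AbortingMoveMethod), and that $\Next$ and $\Lookup$ are literally the same algorithms as in \AbortingMoveMethod; hence the linearization points and the $\Lookup$ property follow verbatim from \Cref{AMPropertiesLemma}. In particular, the paper does \emph{not} re-derive case (iii) or argue replay correctness here at all.

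The replay order-correctness argument you lay out in the second and third paragraphs---the timestamp invariant, the sibling-ordering analysis, the interaction with $\tt{MoveItem}$, and the $\StartCount-\EndCount=\offset$ snapshot condition---is in the paper, but it is factored into separate lemmas (\Cref{TRReplayLemma} and the $\Move$ case of \Cref{TRTransOperationsLemma}) that establish the correctness of the transformation-supporting operations, not the client-visible ones. Your approach folds that material into the present lemma, which makes your case (iii)/(iv) more self-contained but also duplicates work that the paper defers. What the paper's factoring buys is a clean separation: linearizability of client-visible operations depends only on the local code path, while the question of whether $S_2$ holds a faithful copy at the moment of $\SwitchLB$ is treated as a property of $\Move$ itself. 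What your approach buys is an explicit check that delegated operations after the switch really do see a consistent list, a point the paper's short proof arguably leaves implicit.
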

    \begin{proof}
        The $\Delete$ and $\InsertAfter$ operations of $\TempReplMethod$ protocol return responses to clients before helping with $\Replicate$ messages even during a concurrent $\Move$. The pseudo code until returning a response to the client is the same in both $\AbortingMoveMethod$ and $\TempReplMethod$ protocols. The $\Next$ and $\Lookup$ are also the exact same operations in both protocols. Hence, the properties of client-visible operations directly follow from the same arguments discussed in \Cref{AMPropertiesLemma}.
    \end{proof}

    To prove the correctness of the individual transformation operations, we first prove the correctness of the Replay algorithm upon receiving any $\Replicate$ messages during an ongoing $\Move$. \Cref{subsec:TRAlgorithm} shares the intuition behind the replay. Upon receiving a request of the form $\ReplicateInsertAfter$($prevItem$, $item$, $oldLocation$), the operation is required to find the location of $prevItem$ using their unique (sID, ts) tuple and insert the $item$ at its appropriate place, and return the inserted location back to the requesting server. This server can now set the node matching the $oldLocation$ to have a $\newLocation$ provided from the replay message. While finding $prevItem$ and returning a node after insertion is straightforward, doing the insertion at the appropriate place requires some observations. For the sake of simplicity, oldLocation parameter is hidden in the discussion, as it does not affect the replay and is only used to complete assigning a newLocation on callback. We begin the proof by introducing the following terminology:
    \begin{definition}[Successor and Predecessor]
    If $\InsertAfter$(Ref(X), Y) has occurred, then Y is a \textit{successor} of X, and X is a \textit{predecessor} of Y. 
    \end{definition}

    \begin{definition}[Ancestor and Descendant]
        We say that X is an \textit{ancestor} of Y if and only if there is a sequence $X, X_1, X_2, \cdots, Y$ such that, each element in the list is a predecessor of the next element. In this case, we also say that $Y$ is a \textit{descendent} of $X$.
    \end{definition}

    \begin{definition}[Link notation $\rightarrow$]
        If the next pointer to a node A points to B, then we denote that a link of the form $A \rightarrow B$ exists. 
    \end{definition}

    \begin{definition}[Left and Right side of a node]
        A node A is said to be at the right side of a node B iff node A can be traversed from B by recursively calling the $\Next$ operation. Such a node B is said to be at the left side of node A.
    \end{definition}
    To prove the correctness of the Replay algorithm of \Cref{alg:TRAlgoReplicateUpdateReceives}, we utilize the following Lemmas:
    
    \begin{lemma}\label{LemmaForTimestamp}
        For any two requests $\InsertAfter$(A,B) and $\InsertAfter$(A,C) occur concurrently, the node that gets inserted first will have a lower timestamp and will be present further away from A.
    \end{lemma}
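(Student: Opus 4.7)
My plan is to prove both halves of the claim by exploiting two structural features of the $\TempReplMethod$ $\InsertAfter$ operation: the RDCSS-based insertion atomically installs the new node as the immediate successor of the prev argument, and the timestamp is drawn from the monotonic logical clock $\LC$ inside the retry loop, so only the timestamp assigned in the winning iteration is the one that persists on the inserted node. The scheme is to first dispatch the geometric ``further from A'' part using the semantics of the RDCSS, and then use the monotonicity of $\LC$ together with the forced retry to order the timestamps.

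For the geometric claim, I would argue as follows. $\InsertAfter$(A, X) swings A.next from its current target onto a freshly allocated node X whose \emph{next} field points to the previous value of A.next. If B's RDCSS succeeds first, the state just after B's insertion contains the links $A \rightarrow B \rightarrow (\text{old A.next})$. Any subsequent successful $\InsertAfter$(A, C) must have read A.next $=$ B in the iteration that wins its RDCSS (otherwise the RDCSS would be firing on a stale comparand and fail), so after C's insertion the state is $A \rightarrow C \rightarrow B \rightarrow \dots$. Thus B sits strictly further from A than C along the \Next{}-traversal.

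For the timestamp claim, let $t_B$ denote the linearization time of B's RDCSS. Any concurrent $\InsertAfter$(A, C) that had captured $\text{temp}$ equal to the pre-B value of A.next before $t_B$ will see its RDCSS fail at or after $t_B$ and loop back. Each retry reconstructs newitem and reassigns $\text{newItem.ts} \gets \LC.\text{IncrementAndGet}()$; by monotonicity of $\LC$, the value obtained after $t_B$ strictly exceeds any timestamp issued before $t_B$, in particular B's timestamp. Since the node that ends up in the list is the one built during C's winning iteration, C's persisted timestamp is strictly greater than B's.

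The main obstacle is ruling out the pathological scenario where C had already drawn a smaller timestamp in an earlier iteration and somehow retains it despite that iteration aborting. This is precluded because the timestamp assignment lives inside the repeat loop and is re-performed on every attempt, and because RDCSS writes the freshly built node object (carrying the current iteration's timestamp) into the list as a single atomic action; the node visible in the list after success is unambiguously the one built in the successful iteration, not a leftover from a failed one. A secondary subtlety is justifying ``inserted first'' as a well-defined total order between B and C under lock-free concurrency, which follows from the single-word CAS linearization that underlies RDCSS as discussed in \Cref{subsec:LMLinearizability}.
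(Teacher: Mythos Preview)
Your proposal is correct and follows essentially the same approach as the paper: the paper's proof of the position claim is exactly your observation that the later RDCSS installs its node as the immediate successor of $A$ with its \textit{next} pointing to the earlier insertion, and its proof of the timestamp claim is the brief form of your argument that the loser of the RDCSS race re-enters the loop and re-fetches a strictly larger value from $\LC$. Your version is more explicit about why the stale comparand forces a retry and why the winning iteration's timestamp is the one that persists, but there is no substantive difference in the argument.
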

    \begin{proof}
        The lower timestamp clause of the lemma follows from the fact that, the insertion that lost the race in \textit{RDCSS} will re-increment the logical clock to get a timestamp higher than the previous values. The latter part of the observation follows from the fact that the node from the later $\InsertAfter$ will occupy the next pointer of A and will have its own next pointer pointing to the element that was inserted at A earlier.
    \end{proof}

    \begin{lemma}\label{LemmaforReplicate}
        If there is a $\Replicate$ message of the form $\ReplicateInsertAfter$(A,C), then $A.ts < C.ts$.
    \end{lemma}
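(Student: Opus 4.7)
The plan is to read off from \Cref{alg:TRAlgoUpdateOps} exactly when a $\ReplicateInsertAfter(A,C)$ message can be generated. Such a message is dispatched only at the tail of $\InsertAfter$ executed on the current \sublist owner $S$, after the \textit{RDCSS} has successfully linked the freshly allocated $newItem$ (which is $C$, with $A$ supplied as $prev$) into the list, and only when $newItem.\newLocation.getSId() \neq myId$. During the iteration of the outer loop that succeeded, $C.\TS$ was set by $\LC.IncrementAndGet()$ on $S$, so $C.\TS$ equals the post-increment value of $S$'s local logical clock at that instant.

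I would first establish the invariant that whenever $S$ owns a \sublist $SL$, its logical clock $\LC$ strictly exceeds every timestamp carried by the items of $SL$ on $S$. This is maintained inductively: locally inserted items receive $\TS$ from $\LC.IncrementAndGet()$, which leaves $\LC$ strictly above the freshly stamped item; and for \sublists brought in by $\tt{MoveReceive}$, $\LC$ is advanced to exceed the maximum $\TS$ of the received items before $S$ accepts any further local insertions (this is already required for the ordering condition $curr.\TS < prev.\TS$ inside $\tt{MoveReceive}$ to interact correctly with subsequent $\InsertAfter$s on the same \sublist). Granting this invariant, when $\InsertAfter$ runs on $S$ with $prev$ set to $A$, the node $A$ already resides in $SL$ and therefore $A.\TS < \LC$ at the moment just before the winning $\LC.IncrementAndGet()$; since that call returns a value strictly greater than the prior $\LC$, and this returned value is exactly $C.\TS$, we conclude $A.\TS < C.\TS$.

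The main obstacle is the treatment of $A$ when it reached $S$ via an earlier completed $\Move$: its $\TS$ could in principle originate from a different server's logical clock. I would discharge this by making the $\LC$-advancement-on-receive convention explicit in $\tt{MoveReceive}$, and by noting that the ordering used by the replay routine already implicitly depends on it. A minor detail is that $C.\TS$ is assigned inside the retry loop of $\InsertAfter$, so earlier failed iterations may have consumed $\LC$ values without installing any item; monotonicity of $\LC.IncrementAndGet()$ ensures that the final $C.\TS$ is at least as large as any prior trial, which preserves $A.\TS < C.\TS$ and aligns cleanly with \Cref{LemmaForTimestamp}.
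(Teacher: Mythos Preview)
Your argument follows the same underlying idea as the paper---$A$ was already present when $\InsertAfter(\mathrm{Ref}(A),C)$ ran, hence the clock that stamped $C$ had already passed the value that stamped $A$---but you carry it out with considerably more care. The paper's proof is a single sentence: ``This follows from the fact that $A$ must have existed in the \sublist before $\InsertAfter(\mathrm{Ref}(A),C)$ was called.'' It does not spell out the invariant on $\LC$, nor does it address the cross-server case you flag, where $A$'s timestamp was assigned by a different server prior to an earlier completed $\Move$.

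What your version buys is precisely that gap being named and closed: the paper's one-liner tacitly treats timestamps as globally monotone, which is immediate when $A$ and $C$ are both stamped by the same server's $\LC$ but otherwise requires the $\LC$-advancement-on-receive convention you propose. The paper's pseudocode for $\tt{MoveReceive}$ does not make such an advancement explicit, so your observation that the replay ordering already implicitly depends on it is a fair reading. In short, your proof is not a different route so much as a filled-in version of the paper's; the extra invariant and the handling of moved-in items are genuine additions rather than redundancy.
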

    \begin{proof}
        This follows from the fact that A must have existed in the \sublist before $\InsertAfter$(Ref(A), C) was called.
    \end{proof}

    \begin{lemma} \label{LemmaForABeforeB}
        If there is a link of the form $A\rightarrow B$ in the \sublist that is being reconstructed, then one out of the following two has to be true: (1) node B was inserted at node A or descendants of A (and so has a timestamp greater than A); (2) B was inserted at a node preceding A (and so has a timestamp less than that of A).
    \end{lemma}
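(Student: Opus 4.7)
The plan is to argue by induction on the sequence of $\InsertAfter$ operations that built the sublist. By the assumption in \Cref{sec:Requirements} that physical de-linking is paused during $\Move$ and $\SwitchLB$, the next-pointer structure evolves solely through $\InsertAfter$ events (logical deletions only set the $isDeleted$ flag and leave pointers untouched), so every link $A \to B$ currently present can be traced to a definite insertion history.

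The base case is trivial: with only the sentinels $\subhead$ and $\subtail$ present, no link $A \to B$ between inserted items exists. For the inductive step, consider the most recent $\InsertAfter(X, Y)$ and let $R$ denote the node that $X$ pointed to immediately before that call. This operation creates a new link $X \to Y$ and redirects $X$'s old out-link into $Y \to R$, leaving every other link untouched (so the inductive hypothesis applies verbatim to them). The link $X \to Y$ falls under case~(1) directly: $Y$ was inserted at $X$, and strict monotonicity of the logical clock (as used in \Cref{LemmaForTimestamp}) gives $Y.ts > X.ts$. For $Y \to R$, I would invoke the inductive hypothesis on the prior link $X \to R$: if case~(1) held there, so $R$ was inserted at $X$ or at a descendant of $X$ with $R.ts > X.ts$, then since $X$ is now immediately to the left of $Y$ in list position and $R.ts < Y.ts$ (because $Y$'s timestamp was assigned after $R$'s), case~(2) holds for $Y \to R$; if instead case~(2) held for $X \to R$, so $R$ was inserted at some $P$ preceding $X$ with $R.ts < X.ts$, then $P$ still precedes $Y$ and $R.ts < X.ts < Y.ts$, again yielding case~(2) for the new link.

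The step that carries the most care is reconciling the two notions of \emph{preceding}: the insertion-chain ancestry from the paper's definition and the list-position ordering in the current sublist. The proof needs the auxiliary invariant that any node $P$ at which $B$ was originally inserted stays list-positioned to the left of the current immediate predecessor of $B$, or equivalently, that the chain of descendants of $P$ which successively took over as $B$'s immediate predecessor only ever introduces nodes strictly between $P$ and $B$. Establishing this invariant alongside the lemma so the two are proved simultaneously is the main obstacle; once it is in hand, the case analysis above discharges \Cref{LemmaForABeforeB}.
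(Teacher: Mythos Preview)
Your inductive approach is sound and more detailed than the paper's own argument, which is brief and informal: the paper simply notes that the list structure is built entirely by $\InsertAfter$ (with $\Split$ contributing only sentinel nodes), so if $B$ sits immediately to the right of $A$ then either $B$ was inserted at $A$ (case~(1)), or $B$ was already present when $A$ arrived, meaning $B$ was inserted at an ancestor of $A$ before $A$ existed (case~(2)). No explicit induction is written out, and ``preceding'' is used there in the ancestry sense.

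The obstacle you flag, and the auxiliary list-position invariant you propose to overcome it, are unnecessary. The troublesome sub-case---where the inductive hypothesis on $X\to R$ yields that $R$ was inserted at a \emph{strict} descendant $D$ of $X$---is vacuous for a direct link: $X.\mathit{next}$ changes only via $\InsertAfter(X,\cdot)$, and that operation always overwrites it with a freshly created node, so $X.\mathit{next}$ can never point to a node that was inserted elsewhere. With that sub-case eliminated, the inductive step for $Y\to R$ closes immediately in the ancestry sense: if $R$ was inserted at $X$, then $X$ is the predecessor (hence an ancestor) of $Y$; if $R$ was inserted at an ancestor $P$ of $X$, then $P$ is also an ancestor of $Y$ by transitivity. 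Either way $R$'s timestamp precedes $Y$'s and case~(2) holds, with no appeal to list position needed.
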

    \begin{proof}
        This follows from the fact that the entire linked list structure (excluding $isDeleted$ values) has been created by the use of $\InsertAfter$ and $\Split$ operations. Since $\Split$ only inserts special, client-invisible nodes ($\dummyNode$, $\subhead$ and $\subtail$), it is safe to argue that the nodes A and B must have been inserted into the list through $\InsertAfter$ operations. $\InsertAfter$ can only insert nodes to the right side of the node provided to it. Hence, if node B is appearing at the right of A, then either it was inserted at A or was inserted at an ancestor of A, at a time before A was inserted into the list. Thus, in the former case, $B.ts > A.ts$ and in the latter case, $B.ts < A.ts$.
    \end{proof}

    \begin{lemma}\label{LemmaB}
        If the replicated (under construction by replay) sublist in a server has a link of the form $A \rightarrow B$ and $A.ts > B.ts$, then any $\ReplicateInsertAfter$(A,C) must insert C in a place after A, but before B.
    \end{lemma}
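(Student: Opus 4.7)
The plan is to prove Lemma~\ref{LemmaB} by combining the timestamp inequalities that the hypotheses imply with a careful reading of the walk in the $\ReplicateInsertAfterReceive$ routine of Algorithm~\ref{alg:TRAlgoReplicateUpdateReceives}. First I would observe two facts that pin down the relative ordering of the three timestamps. By Lemma~\ref{LemmaforReplicate} applied to the message $\ReplicateInsertAfter(A, C)$, we have $A.ts < C.ts$. By hypothesis $A.ts > B.ts$. Chaining these gives $B.ts < A.ts < C.ts$, and the inequality $B.ts < C.ts$ will be the key fact that ensures the replay walk halts no later than at $B$.

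Next I would analyze the replay walk itself. The routine initializes $curr$ with $prev$, the local replica of $A$ obtained via the $(sId, ts)$ search in Lines~\ref{alg:TRFindPrevStart}--\ref{alg:TRFindPrevEnd}, and then enters the outer Repeat loop whose inner loop repeatedly sets $currPrev \gets curr$ and $curr \gets currPrev.next$, stopping as soon as $curr.ts < item.ts = C.ts$. I would maintain two invariants across iterations: (i) $currPrev$ either equals $A$'s local replica or is reachable from it by following $next$ pointers, so $currPrev$ is at or after $A$; (ii) because the inner loop only advances while $curr.ts \geq C.ts$, if the walk ever reaches the local replica of $B$ then $B.ts < C.ts$ triggers immediate termination with $curr = B$ (or earlier). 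In particular, $curr$ can never be strictly past $B$.

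From here the conclusion is almost mechanical: when the CAS on $currPrev.next$ succeeds, $newItem$ is spliced in as $currPrev \to newItem \to curr$, so $A$ appears at or before $currPrev$ and $curr$ appears at or before $B$. Since $newItem \neq A$ and the CAS sets $currPrev.next = newItem$, $C$ lies strictly after $A$; since $newItem.next = curr$ and $curr$ is at or before $B$, $C$ lies strictly before $B$. This establishes the lemma.

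The main obstacle will be handling CAS retries under concurrent replays: between a failed CAS and the next attempt, another $\ReplicateInsertAfter(A, \cdot)$ may have spliced a new descendant of $A$ into the stretch between $currPrev$ and $curr$, changing $currPrev.next$. I would argue that this only forces the walk to advance further through nodes whose timestamps are $\geq A.ts$ (by Lemma~\ref{LemmaForABeforeB} those new nodes are descendants of $A$ and so have timestamp $> A.ts$), and since termination is governed solely by the test $curr.ts < C.ts$ and $B.ts < C.ts$ is invariant, no retry can push $curr$ past the replica of $B$. Thus invariants (i) and (ii) persist across retries, so whichever attempt finally succeeds places $C$ strictly between $A$ and $B$, as required.
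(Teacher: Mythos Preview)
Your proof is correct, but it takes a genuinely different route from the paper's.

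The paper reads Lemma~\ref{LemmaB} \emph{normatively}: it argues where $C$ \emph{ought} to go so that the reconstruction is faithful. Its one-line proof invokes Lemma~\ref{LemmaForABeforeB}: since $A.ts > B.ts$, case~(2) applies and $B$ was inserted at a node preceding $A$, so $B$ is not a competing successor of $A$; consequently any successor $C$ of $A$ must land in the block strictly after $A$ and strictly before $B$. Lemmas~\ref{LemmaB} and~\ref{LemmaD} are then combined to \emph{derive} the rule ``find the first node with $curr.ts < item.ts$'', which is what the algorithm implements.

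You instead read the lemma \emph{operationally} and trace Algorithm~\ref{alg:TRAlgoReplicateUpdateReceives} directly: from Lemma~\ref{LemmaforReplicate} you get $A.ts < C.ts$, combine it with the hypothesis to obtain $B.ts < C.ts$, and then show the inner walk terminates with $currPrev$ at or after $A$ and $curr$ at or before $B$, even across CAS retries. This is more detailed and explicitly handles concurrent replays, which the paper's argument leaves implicit. The trade-off is that your argument somewhat inverts the paper's narrative---the paper uses Lemma~\ref{LemmaB} to justify the algorithm's design, whereas you use the algorithm's code to establish Lemma~\ref{LemmaB}---and you rely on Lemma~\ref{LemmaforReplicate} rather than Lemma~\ref{LemmaForABeforeB}. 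Both routes are sound; yours gives a direct mechanical verification, while the paper's gives the semantic reason the placement is correct.
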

    \begin{proof}
        This follows from Lemma \ref{LemmaForABeforeB}, by which B was not inserted at A in this case. Hence B is not a competing insertion at A, and C can be inserted just after A as the first replicated insertion at A.
    \end{proof}

    \begin{lemma}\label{LemmaD}
        If the replicated server has a link of the form $A\rightarrow B$, and $A.ts < B.ts$, then any $\ReplicateInsertAfter$(A,C) must traverse from A and find a node D such that $C.ts > D.ts$ and insert C just before D.
    \end{lemma}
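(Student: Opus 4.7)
The plan is to prove Lemma D in two parts: termination of the traversal (existence of a suitable $D$), followed by correctness of the position immediately before $D$ as the place to insert $C$.

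For termination, I would invoke Lemma \ref{LemmaforReplicate} to get $A.ts < C.ts$ and then exhibit an explicit witness. At the moment $\InsertAfter$($A$, $C$) was invoked on $S_1$, $A.next$ pointed to some node whose timestamp is strictly less than $C.ts$: either a pre-existing node (inserted before $A$, hence of smaller timestamp) or the most recent prior $\InsertAfter$ at $A$, which, by monotonicity of $S_1$'s logical clock \LC and the fact that $C$ obtains its timestamp strictly later than any completed prior insert at $A$, carries a timestamp less than $C.ts$. Together with finiteness of the reconstructed sublist and the sentinel timestamp of $\subtail$, this guarantees the inner loop at Line \ref{TRInsertReplayConditionForInsertSpot} of \Cref{alg:TRAlgoReplicateUpdateReceives} halts at some $D$ with $D.ts < C.ts$.

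For correctness of the position, I would use Lemma \ref{LemmaForABeforeB}: since $A.ts < B.ts$, node $B$ was inserted at $A$ or at a descendant of $A$. Applying Lemma \ref{LemmaForTimestamp} iteratively, the subsequence of successors of $A$ arising from $\InsertAfter$ calls placed directly at $A$ appears in strictly decreasing timestamp order as one moves rightward from $A$. Node $C$, with $A.ts < C.ts$, therefore belongs at the unique slot in that decreasing chain where its timestamp fits, which is precisely the slot the algorithm locates by scanning for the first $D$ with $D.ts < C.ts$.

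The main obstacle will be handling intermediate traversal nodes that were inserted not at $A$ but at descendants of $A$: such nodes can carry timestamps larger than $C.ts$ without forcing $C$ to be placed after them. I plan to address this by induction on the order in which $\Replicate$ messages have been processed on the receiving server, maintaining the invariant that the reconstructed sublist is consistent with some prefix of the operation history on $S_1$. Under this invariant, every node traversed before reaching $D$ either was inserted directly at $A$ with timestamp larger than $C.ts$, or is a descendant-insertion rooted at such an inserted-at-$A$ ancestor; in either case it correctly lies to the left of $C$'s final slot. Once $D$ with $D.ts < C.ts$ is reached, the strictly decreasing order among inserted-at-$A$ successors rules out any further $A$-sibling of $C$ appearing to the right of $D$, so inserting $C$ immediately before $D$ via the CAS at Line \ref{TRInsertReplaycurrFinding} yields the correct structure.
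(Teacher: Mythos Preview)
Your core argument matches the paper's: invoke Lemma~\ref{LemmaForABeforeB} (since $A.ts < B.ts$, node $B$ arose from an insert at $A$ or a descendant of $A$), then use Lemma~\ref{LemmaForTimestamp} to order the competing inserts at $A$ by decreasing timestamp as one moves rightward, concluding that the first $D$ with $D.ts < C.ts$ is where $C$ belongs. The paper's own proof is exactly that, in four sentences, with no separate termination clause and no induction.

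Your plan adds two things the paper leaves implicit. The termination argument is a harmless addition. The proposed induction, however, carries an invariant that will not hold as stated: because replays may be processed on $S_2$ in an order different from the operation order on $S_1$, the reconstructed sublist after a partial set of replays need \emph{not} coincide with any prefix of $S_1$'s history (e.g., if $\InsertAfter(A,N_1)$ precedes $\InsertAfter(A,N_2)$ on $S_1$ but $N_2$'s replay is processed first on $S_2$, the intermediate state $A\rightarrow N_2\rightarrow\cdots$ is not a prefix state of $S_1$). The induction can be salvaged by weakening the invariant to ``the relative left--right order of already-replayed nodes on $S_2$ agrees with their relative order on $S_1$,'' which is precisely what Lemmas~\ref{LemmaForTimestamp} and~\ref{LemmaForABeforeB} supply. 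With that adjustment your plan goes through and is simply a more careful rendering of the paper's direct argument.
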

    \begin{proof}
        From Lemma \ref{LemmaForABeforeB}, this conveys that B was inserted at A. This means C and B were possibly competing to insert at A, and so if $C.ts < B.ts$, C has to be present after B, according to Lemma \ref{LemmaForTimestamp}. After B, there could exist other descendants of A that were inserted in ways similar to B. Hence, the replay has to find the first node D after A, such that $C.ts > D.ts$, which will ensure either that D is not a descendant of A or that D is a descendant of A that was inserted into the list before C was inserted. Either case, C has to be inserted just before D.
    \end{proof}
    
    \begin{lemma}\label{TRReplayLemma}
        The Replay algorithm reconstructs the \sublist with the exact structure that is present in the source machine of the $\Move$. 
    \end{lemma}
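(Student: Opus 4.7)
The plan is to prove the lemma by induction on the number of $\Replicate$ messages processed at $S_2$, maintaining the invariant that at every quiescent moment the replica on $S_2$ is structurally identical to the sublist on $S_1$ restricted to the history of updates already shipped or replayed. Here structural identity means the same sequence of nodes in the same order, the same $isDeleted$ flag on each, and the same $(sId, ts)$ pair on each; alignment between the two lists is well-defined because each successful insertion generates a fresh $(sId, ts)$ pair that is globally unique, and the $\ReplicateInsertAfter/\ReplicateDelete$ messages locate their targets via this pair.

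For the base case, I would argue that when $\tt{MoveItem}$ finishes its recursion in Algorithm 4, the replica matches the snapshot of $S_1$'s sublist that $\tt{MoveItem}$ visited: every client-visible update that raced with the traversal either acted on a not-yet-shipped node (hence is embedded in the snapshot copied to $S_2$) or acted on an already-shipped node and issued a $\Replicate$ message, which is exactly what the induction consumes. For the inductive step, $\ReplicateDelete$ is handled directly by uniqueness of $(sId, ts)$: the replay locates the target node, flips $isDeleted$, and the order in which multiple such messages are processed is immaterial since deletions commute. For $\ReplicateInsertAfter(prevItem, item)$ I would invoke the preceding lemmas in combination: the timestamp lemma on concurrent inserts at a common predecessor, the lemma that $A.ts < C.ts$ for every replicated insert, and the two case-lemmas (Lemmas B and D in the excerpt) that pin down the required insertion point relative to any link $A \to B$ already present in the replica. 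The inner walk of the replay stops at the unique pair $(currPrev, curr)$ with $currPrev.ts \ge item.ts > curr.ts$, which by those lemmas is exactly the position $S_1$ chose.

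The main obstacle I anticipate is genuine concurrency of multiple $\ReplicateInsertAfter$ messages that race at the same predecessor and may arrive at $S_2$ in an order different from their $\InsertAfter$ order at $S_1$. Here the CAS on $currPrev.next$ inside the outer repeat is the essential synchronization device: if a concurrent replay has already inserted an item between $currPrev$ and $curr$, the CAS fails, the outer loop re-scans, and the inner loop advances past any interloper whose timestamp exceeds $item.ts$ while stopping just before any interloper with smaller timestamp. I would formalize this by showing that for any two concurrent replays of items $C_1, C_2$ targeting the same predecessor with $C_1.ts < C_2.ts$, every interleaving of their CAS operations yields the same final order, namely $C_2$ closer to the predecessor than $C_1$; by the timestamp lemma this coincides with the order produced by the corresponding RDCSS race on $S_1$. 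Combined with the deletion argument, this yields structural identity at the moment the $\Move$'s terminating CAS on $\StartCount$ succeeds, since by that point every $\Replicate$ message issued by $S_1$ has been acknowledged and processed.
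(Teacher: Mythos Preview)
Your proposal is correct and, at its core, uses the same key ingredient as the paper: the two case-lemmas (what you call Lemmas~B and~D) that, together with the timestamp ordering of concurrent inserts, pin down the unique insertion point for each $\ReplicateInsertAfter$. The paper's own proof is essentially a two-sentence appeal to those lemmas, observing that the inner walk in the replay halts at the first node $curr$ with $curr.ts < item.ts$, and that this node is necessarily of type~B or type~D, hence the correct insertion point.

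Where you differ is in the surrounding scaffolding. The paper gives no inductive framework, does not separately treat the base case arising from the $\tt{MoveItem}$ traversal, does not discuss $\ReplicateDelete$ in this proof (commutativity of deletions is left implicit), and does not explicitly argue that concurrent replays at $S_2$ arriving out of timestamp order still converge to the same structure via the CAS retry. Your decomposition is therefore strictly more detailed: the induction on processed messages and the explicit race analysis at $S_2$ make the argument self-contained, whereas the paper leans on the reader to fill those gaps from the algorithm description. Both arrive at the same conclusion by the same mechanism; yours simply makes the informal parts of the paper's reasoning explicit.
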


    \begin{proof}
        Combining Lemmas \ref{LemmaB} and \ref{LemmaD}, the algorithm to replay a request of $\ReplicateInsertAfter$($prevItem$,$item$) condenses down to finding the first node($curr$ in Line \ref{TRInsertReplaycurrFinding} of \Cref{alg:TRAlgoReplicateUpdateReceives}), such that $curr.ts < item.ts$ (as in Line \ref{TRInsertReplayConditionForInsertSpot}). This will ensure that curr is either a node of type B in \ref{LemmaB} or a node of type D in \ref{LemmaD}. Hence, the insert replays reconstruct the same \sublist that the corresponding $\InsertAfter$ operations had created on the other server. 

    \end{proof}

    \begin{lemma}\label{TRTransOperationsLemma}
        The $\Split$, $\Move$ and $\SwitchLB$  operations satisfy their requirements described in \Cref{sec:Requirements} when their termination condition is met.
    \end{lemma}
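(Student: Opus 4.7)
The plan is to mirror the structure of Lemma~\ref{AMTransOperationsLemma}, exploiting the fact that the \TempReplMethod protocol shares its $\Split$ and $\SwitchLB$ pseudo-code with \AbortingMoveMethod. For these two operations, I would observe that neither the $\Replicate$ machinery nor the new timestamp/server-id fields affect the control flow of \Cref{alg:AMAlgoLoadBalanceSplit}, nor the body of \SwitchLB in \Cref{alg:AMAlgoLoadBalanceMove}; the termination conditions (a momentary write-free window for the offset computation in $\Split$, no special condition for $\SwitchLB$) and the three correctness requirements (dummy-node insertion, separate counters with correct offsets, $\Registry$ update for $\Split$; continuous ownership, preceding-subtail re-link, $\theta$-bounded safe reclamation for $\SwitchLB$) therefore follow verbatim from the corresponding cases of Lemma~\ref{AMTransOperationsLemma}. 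This reduces the lemma to establishing the requirement for $\Move$.

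For $\Move$, I would first identify the termination condition precisely: the CAS at Line~\ref{alg:TRAlgoLoadBalanceMoveCAS} of \Cref{alg:TRAlgoLoadBalanceMove} must succeed, which requires a momentary write-free instant occurring after both the subtail has been cloned and every outstanding $\Replicate$ message has been replayed. The requirement from \Cref{sec:Requirements} is that, at the end of $\Move$, the \sublist on $S_2$ is the most recent version and that future updates land on $S_2$. I would show this in three steps. First, the recursive \texttt{MoveItem} call clones every node of the sublist into $S_2$ in order (initializing a fresh $\subhead$, $\subtail$ and counters at Lines~\ref{AMMoveReceiveSHStart}--\ref{AMMoveReceiveSHEnd} of \Cref{alg:AMAlgoLoadBalanceMove}, and recording each node's \newLocation). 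Second, any $\InsertAfter$ or $\Delete$ that happens concurrently and touches a node whose \newLocation now points to $S_2$ triggers a $\Replicate$ message whose replay on $S_2$, by Lemma~\ref{TRReplayLemma}, preserves the structure (including ordering by timestamp) of the corresponding modification on $S_1$. Third, the invariant $\StartCount-\EndCount=\offset$ on a quiescent \sublist (used already in \AbortingMoveMethod) is preserved here because \EndCount is incremented only after the replay acknowledgement is received (Lines~\ref{alg:ReplicateInsertAfterEndC} and \ref{alg:ReplicateDeleteEndC} of \Cref{alg:TRAlgoUpdateOps}); hence the CAS's success at Line~\ref{alg:TRAlgoLoadBalanceMoveCAS} genuinely witnesses a moment at which the two copies coincide and no update is pending. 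Once the CAS installs $-\infty$ into \StartCount, the client-visible-update checks in \Cref{alg:AMAlgoUpdate} force every subsequent $\InsertAfter$ and $\Delete$ to delegate to $S_2$, completing the requirement.

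The main obstacle will be the third step: verifying that the counterTemp comparison really detects a simultaneously-quiescent-and-fully-replayed state, because the \StartCount is incremented on $S_1$ at the \emph{start} of an update while \EndCount is incremented on $S_1$ only \emph{after} the replay ack returns from $S_2$. I would argue this by tracking each in-flight update through four phases (\StartCount increment on $S_1$, local RDCSS/CAS on $S_1$, replay on $S_2$, \EndCount increment on $S_1$) and showing that \counterTemp equals the instantaneous \StartCount only when every phase-1 event has been matched by a phase-4 event, i.e., when no replay is outstanding and no local update is in flight. Combined with the subtail having been cloned (a syntactic precondition for reaching Line~\ref{alg:TRAlgoLoadBalanceMoveCAS} in \texttt{MoveItem}), this yields structural identity of the two copies at the linearization instant of the successful CAS, after which the $\SwitchLB$ call at Line~\ref{alg:TRSwitchCall} carries the argument home via the $\SwitchLB$ part of Lemma~\ref{AMTransOperationsLemma}.
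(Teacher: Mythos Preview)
Your proposal is correct and follows essentially the same approach as the paper: you reduce $\Split$ and $\SwitchLB$ to Lemma~\ref{AMTransOperationsLemma} by virtue of the shared pseudo-code, and for $\Move$ you argue that the delayed \EndCount increment (Lines~\ref{alg:ReplicateInsertAfterEndC}, \ref{alg:ReplicateDeleteEndC}) together with Lemma~\ref{TRReplayLemma} ensures the two copies are identical at the instant the CAS at Line~\ref{alg:TRAlgoLoadBalanceMoveCAS} succeeds. Your four-phase tracking of in-flight updates is more detailed than the paper's own argument, but it is an elaboration of the same idea rather than a different route.
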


    \begin{proof}
        \begin{itemize} 
        \item {\textbf{$\Split$: }} Correctness of $\Split$ operation follows from \AbortingMoveMethod method, as they share the same $\Split$ operation.
        \item {\textbf{$\Move$: }} 
            While the $\Move$ operation in $\TempReplMethod$ does not repeat the entire operation in a loop, a looped attempt on setting \StartCount to $-\infty$ is done through a \textit{CAS} operation at Line \ref{alg:TRAlgoLoadBalanceMoveCAS} of \Cref{alg:TRAlgoLoadBalanceMove}. The first success of this \textit{CAS} will indicate the termination of the $\Move$ operation (including ${\tt Replication}$). The condition for its termination is that no update operation should take place on a \sublist momentarily after the last \EndCount increment, at some point. When this happens, it indicates that the \EndCount and \StartCount had matched momentarily. Having the $\Replicate$ acknowledgements (the upon Receive functions in \Cref{alg:TRAlgoUpdateOps}) increment the \EndCount (by Lines \ref{alg:ReplicateInsertAfterEndC} and \ref{alg:ReplicateDeleteEndC} of \Cref{alg:TRAlgoUpdateOps}), $\Replicate$ message is processed before the the \EndCount gets incremented. Thus, when \StartCount and \EndCount are momentarily equal, the \sublist in both servers are identical, as guaranteed by the reconstruction (Lemma \ref{TRReplayLemma}). Hence $\Move$ completes successfully.      
        \item {\textbf{$\SwitchLB$}: } Correctness of $\SwitchLB$ operation follows from $\AbortingMoveMethod$ method, as they share the same $\SwitchLB$ operation.
    \end{itemize}
    \end{proof}

\begin{theorem}
        The \TempReplMethod protocol transforms the concurrent linked list in \Cref{alg:BaseAlgUpdate} into a distributed linked list while retaining its client-visible guarantees and supporting $\Split$, $\Move$ and $\SwitchLB$ transformation operations for the linked list.
    \end{theorem}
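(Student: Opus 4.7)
The plan is to reduce the theorem directly to the two supporting lemmas already established earlier in \Cref{TRAppendix}, mirroring the structure used for the analogous \AbortingMoveMethod theorem. Specifically, the theorem decomposes into two orthogonal requirements: (i) that the client-visible operations $\Lookup$, $\Next$, $\InsertAfter$, and $\Delete$ retain the guarantees specified in \Cref{subsec:LMLinearizability} regardless of whether any client-invisible operation is in progress, and (ii) that $\Split$, $\Move$, and $\SwitchLB$ each fulfill their specification from \Cref{sec:Requirements} upon meeting their termination condition. So the proof body would be a two-sentence appeal: the first requirement follows from \Cref{TRPropertiesLemma}, and the second from \Cref{TRTransOperationsLemma}.

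Before invoking \Cref{TRTransOperationsLemma}, I would briefly remind the reader that its proof of the $\Move$ clause rests on \Cref{TRReplayLemma} (correctness of the replay reconstruction), which in turn relies on the timestamp-based structural observations (Lemmas \ref{LemmaForTimestamp}--\ref{LemmaD}). These are the non-trivial ingredients distinguishing \TempReplMethod from \AbortingMoveMethod, and citing them makes the reduction self-contained. The $\Split$ and $\SwitchLB$ clauses inherit directly from the \AbortingMoveMethod analysis since the pseudo code is shared.

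For the client-visible guarantees, I would point out that \Cref{TRPropertiesLemma} has already justified why the linearization-point arguments from \Cref{AMPropertiesLemma} carry over: the update operations of \TempReplMethod coincide with those of \AbortingMoveMethod up through the point where a response is returned to the client, and $\Next$ and $\Lookup$ are pseudo-code-identical in the two protocols. Any $\Replicate$ traffic is invisible to the client and occurs strictly after the client-visible linearization point, so it cannot alter the observable semantics.

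I do not expect a genuine obstacle here; the theorem is a straightforward combination. The only subtle point worth flagging explicitly is that \TempReplMethod's $\Move$ has a different termination condition than \AbortingMoveMethod's --- rather than demanding a write-free interval across the entire copy, it succeeds when \EndCount has absorbed all replay acknowledgements and momentarily equals $\StartCount + \offset$. Ensuring the reader sees that the $-\infty$ CAS at Line \ref{alg:TRAlgoLoadBalanceMoveCAS} is the correct synchronization point (so that no in-flight replay is dropped) is the one place where the reduction must quote \Cref{TRReplayLemma} rather than reuse the \AbortingMoveMethod argument verbatim.
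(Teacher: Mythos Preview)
Your proposal is correct and takes essentially the same approach as the paper: the paper's proof is the single sentence ``This follows directly from Lemma \ref{TRPropertiesLemma} and Lemma \ref{TRTransOperationsLemma}.'' Your additional remarks about \Cref{TRReplayLemma} and the differing termination condition for $\Move$ are accurate context but go beyond what the paper itself records in the theorem's proof.
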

    \begin{proof}
        This follows directly from Lemma \ref{TRPropertiesLemma} and Lemma \ref{TRTransOperationsLemma}.
    \end{proof}
\section{Sorted Linked List} \label{sec:appendixsortedlistsection}

\subsection{Single Machine Implementation}
    We construct a sorted list with three client operations - $\Insert$(key), $\Search$(key) and $\Delete$(key), with each node having a unique key. Referring to the implementation in \Cref{alg:BaseSortedList}, $\Search$ and $\Delete$ both finds the position of the current key. $\Search$ returns this node and $\Delete$ deletes this node. We model this using our $\Lookup$(key) operation (\Cref{alg:BaseAlgUpdate}) for both operations, with $\Delete$ additionally utilizing $\Delete$(node) of \Cref{alg:BaseAlgUpdate}. 
    
    $\Insert$(key) requires a traversal that is different from just a $\Lookup$, as it requires not a matching key, but the highest key before the current key. We use $\Next$ from \Cref{alg:BaseAlgUpdate} to traverse and find such a node to insert after. Then we execute $\tt{TryInsertAfter}$, which is one round of loop execution of our $\InsertAfter$ operation in \Cref{alg:BaseAlgUpdate}. If the execution of RDCSS failed, we retry the $\Insert$ operation beginning from the traversal, instead of just redoing the insertAfter operation, to maintain sorted list and unique key properties.  

    Note that the three client operations ($\Insert$, $\Delete$ and $\Search$) are linearizable. $\Insert$ and $\Delete$ are linearizable at the RDCSS and CAS operations respectively. For $\Search$, there are two cases : 
    \begin{inparaenum}
        \item If $\Search$ returns a node of a matching key, then it is linearized at the $isDeleted$ check on the returned node
        \item If $\Search$ returns not found, then it is linearized at the last node whose next it traversed past.
    \end{inparaenum}

\begin{algorithm}[h]
\scriptsize
\DontPrintSemicolon
\caption{A lock-free unordered linked list}\label{alg:BaseSortedList}

\Struct{$\tt{Item}$} {
$\Key$ key,\;
$\ItemRef$ next,\;
$\tt{Boolean}$ isDeleted\;
}

\Fn{$\tt{Status}$ \  Insert($\Key$ key)}{
    SH $\gets$ $\Head$()\;
    $status \gets false$\;\label{alg:BaseSortedListInsertline2}
    \Repeat{status = true}{
        $prev \gets SH$\;
        $curr \gets \Next(prev)$\;
        \While{$curr.key \leq key\ or\ curr = SH$}{
            $prev \gets curr$\;
            $curr \gets \Next(prev)$\;
        }
        \If{prev.key = key} {
            return "Failed : Key already exists"\;
        }
        $status \gets TryInsertAfter(prev,key)$\;
    }
    return "Inserted successfully"\;\label{alg:BaseSortedListInsertEnd}
}

\Fn{$\tt{Status}$ \ $TryInsertAfter$($\ItemRef$ prev, $\Key$ key)}{
    \Comment{One execution of lines \ref{alg:BaseAlgoInsertCAS} - \ref{alg:BaseAlgoUpdateInsertAfterLoop}}
    return result of CAS\;
}

\Fn{$\tt{ItemRef}$ \ $Search$($\ItemRef$ prev, $\Key$ key)}{
    return Lookup(SH, key)\;
}

\Fn{$\tt{Delete}$ \ $\Delete$($\Key$ key)}{
    $node \gets Lookup(SH, key)$\;
    return $\Delete$(node)\;
}

\Comment[Lookup, Delete and Next are as implemented in \Cref{alg:BaseAlgUpdate}]    
\end{algorithm}

\subsection{Distributed Sorted Linked List Implementation}\label{sec:sortedlistdist}
    The key aspect of a sorted linked list that distinguishes it from an unordered linked list is the search or lookup operation for some kind of a key, that precedes every client operation. If our \sublists can help speed up this search by restricting the search to a \sublist instead of an entire list, then a distributed sorted list would have a better time complexity as the list grows, by periodically using $\Split$ to increase the number of \sublists, and thereby, putting a threshold on the number of elements per \sublist.

    In our $\Registry$, in addition to what we store in \Cref{subsec:AMDS}, a sublist will additionally be described by its $key_{min}$ and $key_{max}$ parameters. All nodes contained in a given sublist has unique keys and each key must be in the interval {$(key_{min}, key_{max}]$}. If there is only one sublist, then the interval should correspond to the key range for the entire list.

    With this construction, in \Cref{alg:TransformedSortedList}, we showcase how a client broadcasts all of its requests, and the server that owns the corresponding key to the client operation, would alone perform the operation by detecting it through $\tt{ClientRequestReceive}$. Once the server owning the sublist detects that it needs to perform the operation, it sends the registry entry for the sublist, along with the operation parameters, thereby reducing any search operations involved in any of the operations to be within this sublist. 

    The benefit of this approach is in time complexity of the search aspect of these operations. If the entire linked list has n nodes and the linked list is $\Split$ into m sublists, the search time is reduced from O(n) worst case time complexity to an amortized O(m + n/m).

\begin{algorithm}[h]
    \scriptsize
    \DontPrintSemicolon
    \caption{A lock-free unordered linked list}\label{alg:TransformedSortedList}
    $\Registry$: \textit{Each entry refers to a \sublist by storing its $\subhead$ pointer as key and a tuple value containing \{\StartCount, \EndCount, \offset , the subtail of its previous \sublist, $key_{min}$, $key_{max}$\}}\;\label{alg:SortedTransAlgoStructRegistry} 
    \Fn{$\ItemRef$ ClientRequestReceive(Operation op, Parameters param)}{
       $ key \gets param.getKey()$\;
       \For{SL : \Registry} {
            \If{ $key \in [SL.getKeyMin(), SL.getKeyMax())$}{
                $param \gets param \cup SL$\;
                op(param)\;
            } 
        }
   }

   \Fn{$\tt{Status}$ \  Insert(sublist SL, $\Key$ key)}{
        SH $\gets$ $SL.getSH()$\;
        Lines \ref{alg:BaseSortedListInsertline2} to \ref{alg:BaseSortedListInsertEnd}\;
    }

    \Fn{$\tt{ItemRef}$ \ $Search$(sublist SL, $\ItemRef$ prev, $\Key$ key)}{
        SH $\gets$ $SL.getSH()$\;
        return LookupSublist(SH, key)\;
    }

\Fn{$\tt{Delete}$ \ $\Delete$(sublist SL, $\Key$ key)}{
    SH $\gets$ $SL.getSH()$\;
    $node \gets LookupSublist(SH, key)$\;
    return $\Delete$(node)\;
}

\Comment{$\textbf{LookupSublist}$ and $\Next$ are as implemented in \Cref{alg:AMAlgoRead}. But $\Next$ allows \subhead to also be a valid returned item}

\Comment{$\Delete$(node) is as implemented in \Cref{alg:BaseAlgUpdate}}

\Comment{$\Delete$(SL, key) and $\Insert$(SL, key) transform with \StartCount and \EndCount based on the chosen transformation protocol}

\end{algorithm}

\subsection{Applying the transformation protocols}

Appendix Section \ref{sec:sortedlistdist} discussed the distributed sorted linked list when there is no ongoin $\Split$ or $\Move$ operation. Here, we discuss how to extend it to deal with these cases. 
Here we note that we can use the same approach as in Algorithms \ref{alg:AMAlgoRead} and \ref{alg:AMAlgoUpdate}. 
    Specifically, the technique of using the \StartCount and \EndCount variables to find write-free periods remains the same. The only addition the transformation protocols require is handling the change in key range whenever a $\Split$ is performed. 

    As illustrated in \Cref{alg:SortedListSplit}, after finding the $\offset$, we additionally need to compute the new key ranges for the split sublists before modifying the $\Registry$. This becomes possible from the observation that, the $\Insert$ operation searches for a (prev, curr) pair such that curr is a node of a greater key, and prev is the node just before curr. 
    
    When we perform $\Split$ at a target $\tt{node}$, we insert a ($\dummyNode$, $\subhead$) block at the $\tt{node}$. Let us call the target node as $Tnode$. Any $\Insert$ operations that take place after the block has been inserted, cannot insert a node after the $Tnode$ (in other words, it cannot insert a node between $Tnode$ and $\dummyNode$. It must place it after the newly inserted $\subhead$). This is because, in the (prev, curr) pair, prev cannot stay as the $Tnode$, as the corresponding curr would have been the $\subhead$ ($\Next$ operation skips past $\dummyNode$), which means the loop in $\Insert$ operation continues to find the $\Next$ node for curr, updating prev to be $\subhead$. This behavior makes $Tnode$ possess the highest key for the first(left) half of the sublist. Hence we store the key range for the left half as {$(oldkey_{min}, Tnode.key]$} and the right half as {$(Tnode.key, oldkey_{max}]$}.
\begin{algorithm}[h]
    \scriptsize
    \DontPrintSemicolon
    \caption{Modifying Split to support faster sorted key search}\label{alg:SortedListSplit}
    \Fn{($\ItemRef$,$\ItemRef$) \textbf{$\Split$}($\ItemRef$ SL, $\ItemRef$ node)} {
        \Comment{Execute upto Line \ref{alg:AMAlgoLoadBalanceSplitOffsetCompute} from \Cref{alg:AMAlgoLoadBalanceSplit}}
        $key \gets node.key()$\;
        $\Registry$.add(SH, \{\newStartCount, \newEndCount, $a_1$, ST, key, $\Registry$.get(SL).getKeyMax()\})\; \label{alg:SortedListAlgoLoadBalanceSplitAddReg}
        $\Registry$.get(SL).setOffset($a_2$)\; \label{alg:SortedListAlgoLoadBalanceSplitUpdateReg}
        $\Registry$.get(SL).setKeyMax(key)\;
        $ST.key \gets \subtail$\; \label{alg:SortedListAlgoLoadBalanceSplitTruncate}
        return $(ST,SH)$\;
    }

    \Comment{$\Move$ and $\SwitchLB$ remain the same in both protocols, except that on moving SH as first node, the $key_min$ and $key_max$ parameters are also copied .}
\end{algorithm}

With the above changes, we could support a distributed sorted linked list. We supported the operations $\Insert$, $\Delete$ and $\Search$, as these operations are traditionally considered in a sorted linked list implementation \cite{Valois}. Our description can additionally support the $\Next$ operation. If we make this client-visible (by doing a second $\Next$ call if $\subhead$ is returned in the first call of $\Next$), it will be a context-aware data structure. 


\section{Discussion Appendix}\label{sec:DiscussionAppendix}
\paragraph{Can $\Lookup$ be made to return the first instance of the key?}

Yes, and the intuition is as follows: we maintain an additional counter in a \sublist $\Registry$ to identify where the \sublist lies. This way, the client stub can identify which \sublist is traversed earlier in the original list. This counter would need to be changed during the $\Split$ operation. To simplify this task, we could maintain the counter as a range of values rather than a single value. {(For example, splitting a \sublist $SL_1$ should generate two \sublists of the form $SL_{1.1}$ and $SL_{1.2}$, splitting $SL_{1.1}$ further should generate $SL_{1.1.1}$ and $SL_{1.1.2}$ and so on)}. In this case, the $\Split$ operation will simply divide the range into two sub-parts. 

\paragraph{What is the difference between graph partitioning and list partitioning?}

Graph partitioning has been studied in detail in the literature (e.g,, \cite{A-RK2006IPDPS}). Generally, in this work, the main concern is performing tasks when there is an edge $(i, j)$ and $i$ and $j$ are stored on two different machines. They do not focus on adding or removing nodes from a graph while performing load balancing, as done in this work. 

\paragraph{Can there be support for crash tolerance?}
 Yes, by using a replicated state machine algorithm such as Raft\cite{raft} for any partition that needs crash tolerance. 

\end{appendices}
\end{document}